\def\DIRCONV{_\textrm{DIR}\textrm{-CONV}}
\def\DIRCONVd{^d_\textrm{DIR}\textrm{-CONV}}
\newcommand{\DIRCONVk}[1]{{_{\textrm{DIR}(#1)}\textrm{-CONV}}}
\newcommand{\DIRCONVkd}[1]{{^d_{\textrm{DIR}(#1)}\textrm{-CONV}}}
\def\rr{\mathbb{R}}
\def\L{\mathcal{L}}
\def\P{\mathcal{P}}
\def\phom{{P_{hom}}}
\newtheorem{theorem}{Theorem}
\newtheorem{lemma}{Lemma}
\newtheorem{corollary}{Corollary}
\newtheorem{proposition}{Proposition}
\theoremstyle{definition}
\newtheorem{problem}{Problem}
\newtheorem{conjecture}{Conjecture}
\long\def\smazat#1{}
\begin{document}
\title{Homothetic Polygons and Beyond: Intersection Graphs, Recognition and Maximum Clique}
%

\author{
Valentin E. Brimkov\footnote{Mathematics Department, SUNY Buffalo State, Buffalo, NY 14222, USA.
}
 \and
Konstanty Junosza-Szaniawski\footnote{Warsaw University of Technology, Faculty of Mathematics and Information Science, Koszykowa 75, 00-662 Warszawa, Poland. 
}
 \and
Sean Kafer\footnote{Department of Combinatorics and Optimization, University of Waterloo, Waterloo, Ontario N2L 3G1, Canada. 
}
 \and
 Jan Kratochv\'il \footnote{Department of Applied Mathematics, Faculty of Mathematics and Physics, Charles University, Malostransk\'e n\'am. 25, 118 00 Praha 1, Czech Republic. 
 }
\and
Martin Pergel\footnote{Department of Software and Computer Science Education, Faculty of Mathematics and Physics, Charles University, Malostransk\'e n\'am. 25, 118 00 Praha 1, Czech Republic. 
}
\and
Pawe{\l} Rz{\k a}{\. z}ewski\footnotemark[2]
\and
Matthew Szczepankiewicz\footnote{Mathematics Department, University at Buffalo, Buffalo, NY 14260-2900, USA.}
\and
Joshua Terhaar\footnotemark[6]
}

\date{ }


\maketitle

\begin{abstract}
We study the {\sc Clique} problem in classes of intersection graphs of convex sets in the plane. The problem is known to be NP-complete in convex-set intersection graphs and straight-line-segment intersection graphs, but solvable in polynomial time in intersection graphs of homothetic triangles. We extend the latter result by showing that for every convex polygon $P$ with sides parallel to $k$ directions, every $n$-vertex graph which is an intersection graph of homothetic copies of $P$ contains at most $n^{k}$ inclusion-wise maximal cliques. We actually prove this result for a more general class of graphs, the so called $k\DIRCONV$, which are intersection graphs of convex polygons whose sides are parallel to some fixed $k$ directions. Moreover, we provide some lower bounds on the maximum number of maximal cliques, discuss the complexity of recognizing these classes of graphs and present relationships with other classes of convex-set intersection graphs. Finally, we generalize the upper bound on the number of maximal cliques to intersection graphs of higher-dimensional convex polytopes in Euclidean space.
\end{abstract}
\section{Introduction}
Geometric representations of graphs, and intersection graphs in particular, are widely studied both for their practical applications and motivations, and for their interesting theoretical and structural properties. It is often the case that optimization problems, that are NP-hard for general graphs, can be solved, or at least approximated, in polynomial time on geometric intersection graphs. Classical examples are the {\sc Independent set}, {\sc Clique}, or {\sc Coloring} problems for interval graphs, one of the oldest intersection-defined classes of graphs \cite{interval}. The former two problems remain polynomially solvable in circle and polygon-circle graphs, while the last one already becomes NP-complete. For definitions and more results about these issues, as well as some possible applications, the interested reader is referred to the works of Golumbic \cite{Golumbic}, McKee and McMorris \cite{McKMcM}, and Spinrad \cite{Spinrad}.

In this paper we 
investigate subclasses of the class of intersection graphs of convex sets in the plane, denoted by CONV, and the computational complexity of the problem of finding a maximum clique in such graphs. This has been motivated  by three arguments. First, the {\sc Clique} problem was shown to be polynomial time solvable for intersection graphs of homothetic triangles in the plane by Kaufmann {\em et al.} \cite{SODA}. (These graphs have been shown to be equivalent to the so called max-tolerance graphs, and as such found direct application in DNA sequencing.) Secondly, the {\sc Clique} problem is known to be NP-complete in CONV graphs \cite{KratKub}, and so it is interesting to inspect the border between easy and hard instances more closely. Thirdly, straight-line segments are the simplest convex sets, and it is thus natural to ask how difficult {\sc Clique} is in intersection graphs of segments in the plane (this class is denoted by SEG). Kratochv\'{\i}l and Ne\v{s}et\v{r}il posed this problem in \cite{KN} after they observed that if the number of different directions of the segments is bounded by a constant, say $k$, a maximum clique can be found in time $O(n^{k+1})$ (this class of graphs is denoted by $k$-DIR, see \cite{KratMat} for more details). This question was answered very recently by Cabello {\em et al.} \cite{Cab} who showed that {\sc Clique} is NP-complete in SEG graphs. Maximal and maximum cliques in intersection graphs of convex sets have also been considered by Amb\"uhl and Wagner \cite{AW} (ellipses and triangles), Brimkov {\em et al.} \cite{BKST} (trapezoids), and Imai and Asano \cite{IA} (rectangles).

In \cite{KratPer}, Kratochv\'{\i}l and Pergel initiated a study of $P_{hom}$ graphs, defined as intersection graphs of convex polygons homothetic to a given polygon $P$. They announced that for every convex polygon $P$, recognition of $P_{hom}$ graphs is NP-hard, and asked in Problem~3.1 if $P_{hom}$ graphs can have a superpolynomial number of maximal cliques. Our main result shows that for every convex $k$-gon $P$, every $P_{hom}$ graph with $n$ vertices contains at most $n^{k}$ maximal cliques, and hence {\sc Clique} is solvable in polynomial-time on $P_{hom}$ graphs for every fixed polygon $P$. For the sake of completeness, we also present the proof of NP-hardness of $P_{hom}$ recognition.

E.J. van Leeuwen and J. van Leeuwen \cite{vLvL} considered a more general class of graphs based on affine transformations of one (or more) master objects.
These were called $\cal P$-intersection graphs, where ${\cal P} = (S,T)$ is a signature consisting of a set $S$ of master objects  and a set $T$ of transformations. They proved that if all objects in the signature are described by rational numbers, such graphs have representations of polynomial size and the recognition problem is in NP. As a corollary, recognition of $P_{hom}$ graphs is in NP (and hence NP-complete) for every rational polygon $P$. In \cite{vLvLM}, van Leeuwens and T. M\"uller proved tight bounds on the maximum sizes of representations (in terms of coordinate sizes) of $P_{translate}$ (i.e. intresection graphs of translated copies of some convex polygon $P$) and $P_{hom}$ graphs.

In proving the main result of our paper, the polynomial bound on the number of maximal cliques, we go beyond the homothetic polygon intersection graphs. We observe that in any representation by polygons homothetic to a master one, the sides of the polygons are parallel to a bounded number of directions in the plane. So, if we relax the requirement on the homothetic relation of the polygons in the representation, we simply consider a set of $k$ directions and look after graphs that have intersection representations by convex polygons, whose every side is parallel to one of those $k$ directions (see Figure \ref{examples}). We call this class $k\DIRCONV$ graphs.

We investigate the class of $k\DIRCONV$ graphs and discuss the complexity of its recognition and relationships to other relevant graph classes (SEG, $k$-DIR, and $P_{hom}$). 
We prove that every such graph has at most $n^{k}$ maximal cliques, where $k$ is the number of directions parallel to at least one side (since we may have two parallel sides). We find this fact worth emphasizing, as it also covers van Leeuwens' $\cal P$-intersection graphs for transformations without rotations. 
The immediate complexity impact of this result is that, for every convex polygon $P$, the {\sc Clique} problem can be solved in polynomial time in $k\DIRCONV$ graphs, even when a representation of the input graph is not given. (It is well-known that all maximal cliques of an input graph can be enumerated with polynomial delay, see Tsukiyama {\em et al.} \cite{maxcliques}.) The exponent of the polynomial of course depends on $k$.

We also pay closer attention to maximal cliques in $P_{hom}$ graphs for specific polygons $P$. If $P$ is a regular $k$-gon, then we can construct a $P_{hom}$ graph with $\Omega(n^{\lfloor k/2 \rfloor (1-\epsilon)})$ maximal cliques (where $\epsilon$ is an arbitrarily small positive constant).

Moreover, for every fixed polygon $P$ but parallelograms we present a construction of a $P_{hom}$  graph with $\Omega(n^3)$ maximal cliques (by a modification of a construction for triangles from \cite{SODA}). It is worth noting that also for the max-coordinate results of \cite{vLvLM}, parallelograms play an exceptional role.

Finally, we generalize the upper bound on the number of maximal cliques to intersection graphs of convex polytopes of higher dimensions. Intersection graphs of higher-dimensional polytopes (namely, high-dimensional boxes) have been considered earlier (see for example \cite{asplund,kostochka,Roberts}).

\begin{figure}[ht]
\begin{center}
\begin{tikzpicture}[xscale=0.5, yscale=0.5]
\newcommand*\poly{--++(2,0)--++(1,1)--++(-3,2)--++(-1,-2) -- cycle}
\draw (0,0) \poly;
\draw[scale=1.2] (2,-1) \poly;
\draw[scale=0.5] (5,4) \poly;
\end{tikzpicture}
\hskip 20 pt
%
\begin{tikzpicture}[yscale=0.5, xscale=0.5]

\draw (0,0)--++(2,0)--++(1,1)--++(-3,2)--++(-1,-2) -- cycle;
\draw[dotted]   (-5,0) -- (6,0)
                (0,-2) -- (5,3)
                (6,-1) -- (-3,5)
                (1,5) -- (-2,-1)
                (-3,3) -- (3,-3);

\draw[scale=0.6] (-3.5,-4) --++ (8,8)--++(-3,2) -- cycle;
\draw[dotted]   (-5.5 * 0.6, -6 * 0.6) -- (6.5 * 0.6,6 * 0.6)
                (7.5 * 0.6, 2 * 0.6) -- (-1.5 * 0.6,8 * 0.6)
                (3 * 0.6,9 * 0.6) -- (-5 * 0.6,-7 * 0.6);

\draw (-1.5,-2) --++ (1,1)--++(-3,2)--++(-1,-1) -- cycle;
\draw[dotted]   (-1.5 - 2,-2 - 2) -- (-0.5 + 5,-1 + 5)
                (-0.5 + 3,-1 -2) -- (-3.5 - 3,1 + 2)
                (-3.5 +5 ,1 +5 ) -- (-4.5 -2 ,0 -2 )
                (-4.5 -3 ,2) -- (-1.5 +3 ,-2 -2 ) ;

\end{tikzpicture}
\end{center}
\caption{Homothetic pentagons (left) and polygons with 5 directions of sides (right).}
\label{examples}
\end{figure}
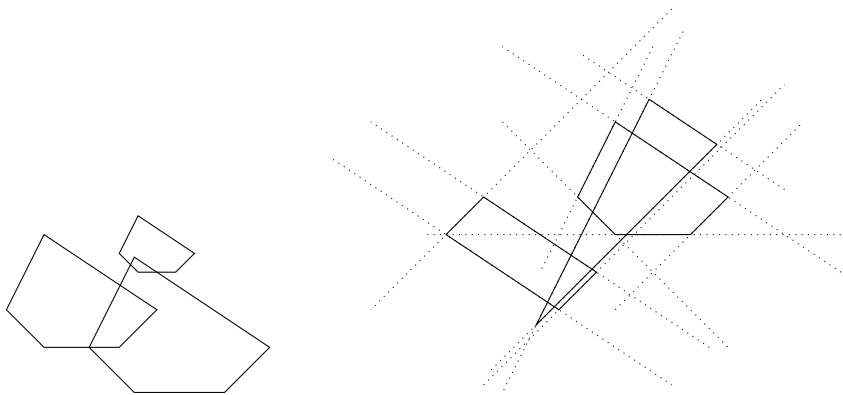

\section{Definitions and basic properties}

In this paper we deal with intersection graphs of subsets of the Euclidean plane $\rr^2$. The following concepts are standard and we only briefly overview them to make the paper self-contained. For a collection $R$ of sets  the {\em intersection graph} of $R$ is denoted by $IG(R)$; its vertices are in 1-1 correspondence with the sets and two vertices are adjacent if and only if the corresponding sets are non-disjoint. In such a case the collection $R$ is called an {\em (intersection) representation} of $G$, and the set corresponding to a vertex $v \in V(G)$ is called {\em the representative} of $v$ and denoted by $R_v$.

The intersection graphs of straight-line segments are called the SEG graphs, of convex sets the CONV graphs, and $k$-DIR is used for SEG graphs having a representation with all the segments being parallel to at most $k$ directions (thus 1-DIR are exactly the interval graphs). 
For a fixed set $P$   (in most cases a convex polygon), the class of intersection graphs of sets homothetic to $P$ is denoted by $P_{hom}$ (two sets are homothetic if one of them can be obtained from the other by scaling and/or translating). If $P$ is a disk, we get disk-intersection graphs, a well studied class of graphs. {\em Pseudodisk intersection graphs} are intersection graphs of collections of closed planar regions (bounded by simple Jordan curves) that are pairwise in a {\em pseudodisk} relationship, i.e., both differences $A\setminus B$ and $B\setminus A$ are arc-connected.

{
We say that a system of arc-connected closed subsets of the plane is in the {\em general position} if:
\begin{itemize}
\item there are no two sets whose boundaries share infinitely many points;
\item there are no three sets whose boundaries share a common point;
\item 
let $x$ be the intersection point of the boundaries of sets $A$ and $B$.
There is a closed neighborhood of $x$, such that when moving on its boundary in clock-wise direction, the boundaries of $A$ and $B$ are met alternately.  

\end{itemize}
Intuitively, the general position forbids the situations depicted in Figure \ref{general}. Notice that if we allow translations and scaling, then we can rearrange any system of arc-connected sets so that they are in general position and the intersections do not change (and thus the intersection graph remains the same).

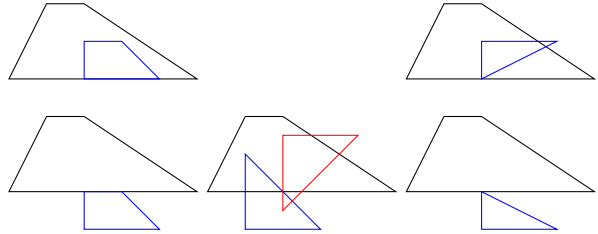
\begin{figure}[ht]
\begin{center}
\begin{tikzpicture}[xscale=0.5, yscale=0.5]
\draw[color=black] (0,0) --++ (5,0) --++ (-3,2) --++ (-1,0) -- cycle;
\draw[color=blue] (2,0) --++ (2,0) --++ (-1,1) --++ (-1,0) -- cycle;
\draw[color=black] (0,-3) --++ (5,0) --++ (-3,2) --++ (-1,0) -- cycle;
\draw[color=blue] (2,-4) --++ (2,0) --++ (-1,1) --++ (-1,0) -- cycle;
\end{tikzpicture}
\begin{tikzpicture}[xscale=0.5, yscale=0.5]
\draw[color=black] (0,0) --++ (5,0) --++ (-3,2) --++ (-1,0) -- cycle;
\draw[color=blue] (1,-1) --++ (2,0) --++ (-2,2) -- cycle;
\draw[color=red] (2,-0.5) --++ (2,2) --++ (-2,0) -- cycle;
\end{tikzpicture}
\begin{tikzpicture}[xscale=0.5, yscale=0.5]
\draw[color=black] (0,0) --++ (5,0) --++ (-3,2) --++ (-1,0) -- cycle;
\draw[color=blue] (2,0) --++ (2,1) --++ (-2,0) -- cycle;
\draw[color=black] (0,-3) --++ (5,0) --++ (-3,2) --++ (-1,0) -- cycle;
\draw[color=blue] (2,-4) --++ (2,0) --++ (-2,1) -- cycle;
\end{tikzpicture}
\end{center}
\caption{Situation forbidden in the general position.}
\label{general}
\end{figure}

Another well-known property is that two sets in general position are in a pseudodisk relation if and only if their boundaries do not intersect or intersect twice.}

Throughout the paper, a {\em polygon} means a closed convex polygon in the plane. 
Let $\L$ be the set of all distinct lines in $\rr^2$ that contain the point $(0,0)$.  For a $k$-tuple of lines $L = \{ \ell_1,..,\ell_k \} \in \binom{\L}{k}$, we denote by $\P(L)$ the family of all polygons $P$ such that every side of $P$ is parallel to some $\ell \in L$. Moreover, by $\P(k)$ we denote $\bigcup_{\binom{\L}{k}} \P(L)$.

Now we introduce the main characters of the paper.
By $k\DIRCONVk{L}$ we denote the class of intersection graphs of polygons of $\P(L)$. Finally, we define $k\DIRCONV = \bigcup _{L \in \binom{\L}{k}} k\DIRCONVk{L}$. Figure~\ref{examples} shows examples of representations of the same graph $P_3$ by intersections of homothetic pentagons and as a $5\DIRCONV$ graph. 

Note that polygons in $\P(L)$ do not have to be in  a pseudodisk relation while homothetic copies of the same polygon always are.
Moreover, observe that $2\DIRCONV$ are the intersection graphs of isothetic rectangles, which are exactly the graphs of boxicity at most 2 (see \cite{Roberts} for more details on boxicity of graphs).

The following property of convex polygons is well-known.
\begin{lemma} [Folklore]\label{lem:separate2}
Any two disjoint convex polygons in $\P(L)$ can be separated by a line parallel to a line from $L$.
\end{lemma}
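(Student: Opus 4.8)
**Proof plan for Lemma~\ref{lem:separate2} (Folklore: separating two disjoint polygons in $\P(L)$ by a line parallel to one in $L$).**

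The plan is to use the standard separating-hyperplane fact for disjoint compact convex sets and then argue that among all separating lines we may always choose one with a direction from $L$. Concretely, let $P, Q \in \P(L)$ be disjoint. Since both are closed, bounded, and convex, there is a line $\ell_0$ strictly separating them; equivalently, there is a unit vector $u$ and a scalar $c$ with $\langle x, u\rangle < c$ for all $x \in P$ and $\langle x, u\rangle > c$ for all $x \in Q$. If $u$ happens to be orthogonal to some $\ell_i \in L$ we are done, so assume not. The idea is to continuously rotate the direction $u$ (equivalently, rotate $\ell_0$) until its normal first becomes orthogonal to one of the finitely many lines of $L$, and to show that throughout this rotation we can keep $P$ and $Q$ weakly separated, so that at the stopping position we still have a separating line --- now parallel to a line of $L$.

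The key steps, in order, are: (1) Record the directions of the sides of $P$ and of $Q$; each such direction is one of $\ell_1,\dots,\ell_k$. (2) Consider the set $\Theta$ of angles $\theta$ such that the line through the origin in direction $\theta$ separates $P$ from $Q$ after suitable translation; this is exactly the set of $\theta$ for which the orthogonal projections of $P$ and $Q$ onto the direction normal to $\theta$ are disjoint intervals. Since $P \cap Q = \emptyset$ and both sets are compact convex, $\Theta$ is a nonempty open subset of the circle of directions. (3) Take a maximal open arc $(\alpha,\beta) \subseteq \Theta$ containing the starting direction. At the endpoint $\alpha$ (say), the two projections touch: their closures share exactly one point, which must be attained at a vertex of $P$ and a vertex of $Q$ (or along an edge). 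The crucial observation is that the boundary of the set of ``projections disjoint'' directions changes only when the supporting lines of $P$ and $Q$ in the projection direction pass through an edge of one of the polygons --- i.e., the extreme angles of $\Theta$ are always attained at directions parallel to some side of $P$ or $Q$, hence parallel to a line of $L$. So $\alpha$ (or $\beta$) gives the desired line direction, and a line in that direction placed between the two touching projections separates $P$ from $Q$ (weakly; a further infinitesimal translation makes it strict if disjointness is required).

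The main obstacle is making step (3) precise: one must verify that as the projection direction varies, the left endpoint of the projected interval of $P$ and the right endpoint of the projected interval of $Q$ are piecewise-sinusoidal functions of $\theta$ whose ``breakpoints'' occur exactly at directions perpendicular to an edge normal --- equivalently, parallel to an edge --- and then argue that the first crossing of these two functions (as we leave $\Theta$) happens at such a breakpoint rather than transversally in the interior of a piece. This is where convexity and the polygonal structure are used essentially: between consecutive edge-parallel directions the supporting vertex of each polygon is fixed, so the gap between the projections is a single smooth (affine-in-$(\cos\theta,\sin\theta)$) function with no interior zero on an interval where the gap started positive and the polygons are genuinely disjoint --- forcing the zero to sit at a breakpoint. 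Once this is established the lemma follows immediately. (Alternatively, one can bypass the rotation argument entirely: take the closest pair of points $p \in P$, $q \in Q$; the perpendicular bisector of $pq$ separates $P$ and $Q$, and a short case analysis on whether $p$, $q$ lie in the relative interior of an edge or at a vertex shows one can rotate this bisector to an edge-parallel direction while keeping separation. I would present the projection argument as the cleaner route.)
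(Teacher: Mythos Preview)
The paper states this lemma as folklore without proof, so there is no ``paper's approach'' to compare against; I will just assess correctness. Your rotation argument has a genuine gap at step~(3). The claim that the boundary of $\Theta$ must occur at an edge-parallel direction is false. Between consecutive breakpoints the gap function is indeed $g(\theta)=\langle q_0-p_0,\,n(\theta)\rangle$ for fixed vertices $p_0\in P$, $q_0\in Q$, but this vanishes precisely when $\theta$ is the direction of the segment $q_0-p_0$, and there is no reason for that direction to lie in $L$. Concretely, take $L$ to be the horizontal and vertical directions, $P=[0,1]^2$, and $Q=[2,3]^2$: the Minkowski difference $P-Q=[-3,-1]^2$ has tangent lines from the origin through the corners $(-1,-3)$ and $(-3,-1)$, so the two endpoints of the arc $\Theta$ have slopes $3$ and $1/3$, neither in $L$. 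The assertion ``no interior zero \dots\ forcing the zero to sit at a breakpoint'' is therefore unjustified, and disjointness of $P$ and $Q$ does not rescue it.

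What actually makes the lemma true is not that the \emph{boundary} of $\Theta$ lies in $L$, but that $\Theta$ always \emph{contains} a direction from $L$. The clean way to see this is the Minkowski-difference argument you almost reach in your alternative sketch: $P\cap Q=\emptyset$ iff $0\notin R:=P-Q$; every edge of $R$ is parallel to an edge of $P$ or of $Q$, hence to some line in $L$; and any point outside a convex polygon is separated from it by the supporting line of at least one of its edges (take the nearest boundary point: if it lies in the relative interior of an edge, that edge works; if it is a vertex, one of the two incident edges works because the displacement vector lies in the normal cone there). That edge-supporting line, translated back, gives the desired separator for $P$ and $Q$ parallel to a line in $L$. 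Your closest-pair alternative is essentially this argument and can be made rigorous, but the rotation argument as written does not go through.
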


\section{Relations between graph classes}

In this section we investigate the relations between the graph classes considered in this paper, i.e., $k\DIRCONV$, $P_{hom}$, SEG, and $k$-DIR.
We first observe that for every $k \geq 2$, each $k$-DIR graph is also in $k\DIRCONV$.

\begin{theorem} \label{kdir-kdirconv}
For every $k \geq 2$, it holds that $k$-DIR $\subseteq k\DIRCONV$.
\end{theorem}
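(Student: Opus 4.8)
The plan is to show that every $k$-DIR graph has an intersection representation by polygons all of whose sides are parallel to one of $k$ fixed directions, so that it lies in $k\DIRCONVk{L}$ for an appropriate $L$. Given a $k$-DIR representation of a graph $G$ by segments $s_1,\dots,s_n$ having directions among $\ell_1,\dots,\ell_k$, the naive idea of replacing each segment by a thin polygon fails for two reasons: a genuine segment is not a polygon with nonempty interior, and fattening two crossing segments independently could create or destroy intersections. So the real work is to fatten all segments simultaneously and consistently.

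The key step is a compactness/perturbation argument. First I would observe that we may assume the segments are in ``general position'' in the sense that no endpoint of one segment lies on another segment and no three segments pass through a common point; any $k$-DIR representation can be perturbed slightly (keeping directions fixed, which is allowed since we only need the same $k$ directions) to achieve this without changing the intersection graph, because the intersection pattern of finitely many segments is stable under sufficiently small perturbations away from these degenerate configurations. Next, for each pair $i \ne j$ let $\delta_{ij}$ be the distance between $s_i$ and $s_j$ if they are disjoint, and let $\varepsilon$ be smaller than half the minimum of all these $\delta_{ij}$ and also smaller than the distance from every segment endpoint to every non-incident segment. Then replace each segment $s_i$ by the polygon $P_i$ obtained as the convex hull of a very thin ``sliver'' around $s_i$: concretely, take $P_i$ to be a parallelogram (or a slightly more complex polygon) of width less than $\varepsilon$ whose long sides are parallel to the direction of $s_i$ and whose short sides are parallel to some fixed auxiliary direction $\ell_{k+1}$ — but to stay within $k$ directions we instead use, for the two short ``caps'' of $P_i$, sides parallel to directions already in $L$; since $k \ge 2$ there is at least one other direction available, and one checks that a thin hexagon or parallelogram with all sides among $\{\ell_1,\dots,\ell_k\}$ can be made to contain $s_i$ and be contained in the $\varepsilon$-neighborhood of $s_i$.

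Finally I would verify that $IG(\{P_1,\dots,P_n\}) = G$. If $s_i \cap s_j \ne \emptyset$, then since each $P_m \supseteq s_m$ we immediately get $P_i \cap P_j \ne \emptyset$. Conversely, if $s_i \cap s_j = \emptyset$, then $P_i \subseteq N_\varepsilon(s_i)$ and $P_j \subseteq N_\varepsilon(s_j)$, and since $\varepsilon < \delta_{ij}/2$ these neighborhoods are disjoint, so $P_i \cap P_j = \emptyset$. Hence the $P_m$ realize exactly the adjacencies of $G$, all sides of every $P_m$ are parallel to one of the $k$ lines in $L$, so $G \in k\DIRCONVk{L} \subseteq k\DIRCONV$.

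The main obstacle is the geometric bookkeeping in the fattening step: one must be careful that a ``thin polygon around a segment'' can actually be built using only the $k$ prescribed directions for its sides (not introducing a new direction for the end caps) while still being convex and still nested between $s_i$ and its $\varepsilon$-neighborhood. The cleanest way around this is to note we only need $k \ge 2$ directions, pick two of them $\ell_a, \ell_b$, and for a segment in direction $\ell_c$ form the polygon as the intersection of a thin slab bounded by two lines parallel to $\ell_c$ with two half-planes bounded by lines parallel to $\ell_a$ (and if $\ell_c \in \{\ell_a,\ell_b\}$ use $\ell_b$ instead), which is automatically convex, has all sides in $L$, and can be made arbitrarily thin around $s_i$; everything else is routine.
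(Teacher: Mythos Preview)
Your proposal is correct and follows essentially the same approach as the paper: fatten each segment into a thin parallelogram whose long sides lie along the segment's direction and whose short sides borrow another direction from $L$, choosing the width smaller than (a constant fraction of) the minimum distance between disjoint segments so that the intersection pattern is preserved. The paper does this more directly---it simply sets $d=\min\{\mathrm{dist}(S_i,S_j):S_i\cap S_j=\emptyset\}$ and takes the parallelogram with corners $x_i\pm\tfrac{d}{3}b_i,\;y_i\pm\tfrac{d}{3}b_i$ for a unit vector $b_i$ parallel to $\ell_{j+1\bmod k}$---so your general-position perturbation step is unnecessary, but the argument is otherwise the same.
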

\begin{proof}
Let $G$ be a $k$-DIR graph and let $R=\{S_i \colon i\in\{1,..,n\}\}$ be a segment representation of $G$. Let $L=\{\ell_1,..,\ell_k\}$ be a set of lines such that every segment in $R$ is parallel to some line in $L$.
We will define a family $R'$ of parallelograms from $\P(L)$, such that the intersection graph of $R'$ is isomorphic to $G$. This will show that $G \in k\DIRCONVk{L}$ and therefore $G \in k\DIRCONV$.
The idea of constructing $R'$ is to extend every segment from $R$ to a very narrow parallelogram in such a way that no new intersection appears (see Figure \ref{transform-kdir}).

\begin{figure}[ht]
\begin{center}
\begin{tikzpicture}[scale=0.7]
\draw (0,0) -- (0,4);
\draw (-1,3) -- (3,3);
\draw (-1,0) -- (3,4);
\draw (0.5,2) -- (3,2);
\draw (3.5,0.5) -- (3.5,3.5);  
\node (p) at (1,-0.5) {$R$}; 
\end{tikzpicture}
\hskip 40pt
\begin{tikzpicture}[scale=0.7]
\draw (0,0) -- (0,4) -- (0.2,4) -- (0.2,0) -- (0,0);
\draw (-1,3) -- (3,3) -- (3.2,3.2) -- (-0.8,3.2) -- (-1,3);
\draw (-1,0) -- (3,4) -- (3,4.2) -- (-1,0.2) -- (-1,0);
\draw (0.5,2) -- (3,2) -- (3.2,2.2) -- (0.7,2.2) -- (0.5,2);
\draw (3.5,0.5) -- (3.5,3.5) -- (3.7, 3.5) -- (3.7, 0.5) -- (3.5,0.5);
\node (p) at (1,-0.5) {$R'$}; 
\end{tikzpicture}

\end{center}
\caption{Transformation from a segment representation (left) to a polygon representation (right).}
\label{transform-kdir}
\end{figure}
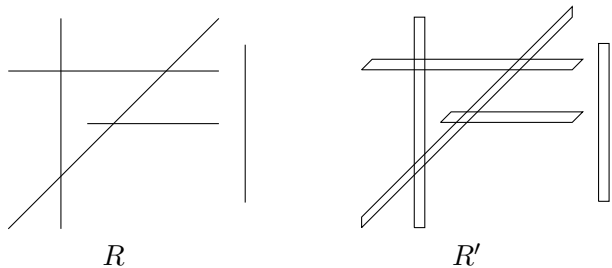

Let $d=\min\{\mathrm{dist}(S_i,S_j): S_i\cap S_j=\emptyset, i,j\in \{1,..,n\}\}$ (where $\mathrm{dist}(S_i,S_j)$ denotes the length of the shortest segment with one end in $S_i$ and the other in $S_j$) and
let $x_i$ and $y_i$ be the endpoints of the segment $S_i$ for $i\in \{1,..,n\}$. Consider $S_i$ for $i\in \{1,..,n\}$. The segment $S_i$ is parallel to a line from $L$, say to $\ell_j$. Let $b_i$ be a unit length vector parallel to $\ell_{j+1 \mod k}$. We set $P_i$ to be the parallelogram with corners $x_i+\frac{d}{3}b_i, x_i-\frac{d}{3}b_i,y_i-\frac{d}{3}b_i, y_i+\frac{d}{3}b_i$, and we set $R'=\{P_i:i\in \{1,..,n\}\}$.

Obviously, if $S_i\cap S_j\neq \emptyset$, then $P_i\cap P_j\neq \emptyset$, since $S_i$ is contained in $P_i$ for every $i\in \{1,..,n\}$. On the other hand notice that every point of $P_i$ is at distance at most $\frac{d}{3}$ from $S_i$. Assume $P_i\cap P_j\neq \emptyset$ and consider a point $z \in P_i\cap P_j$. Such a $z$ is at distance at most $\frac{d}{3}$ from $S_i$ and $S_j$. By the triangle inequality, $S_i$ and $S_j$ are at distance at most $\frac{2d}{3}$ and by the definition of $d$ they intersect each other.
\end{proof}

Now let us turn our attention to $P_{hom}$ graphs.
Clearly $P_{hom} \in k\DIRCONV$ for any convex polygon $P$ with sides parallel to at most $k$ directions. In particular, $P_{hom} \in k\DIRCONV$ for any $k$-gon $P$.

Next we prove that $P_{hom}$ graphs are pseudodisk intersection graphs for any $P$. Although the proof is claimed to be known, e.g., in \cite{APS}, we enclose an alternative proof to make the paper self-contained.

{
\begin{lemma} \label{lemma:homarepdisk}
Homothetic convex bodies in the plane in general position form an arrangement
of pseudodisks. Thus, for every convex polygon $P$, the class $P_{hom}$ is
a subclass of the class of pseudodisk intersection graphs.
\end{lemma}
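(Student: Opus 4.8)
The plan is to reduce the statement to a claim about two bodies and then prove that claim by a radial sweep from the centre of the homothety. An arrangement of convex bodies is a pseudodisk arrangement exactly when every two of its members are in the pseudodisk relation, and every polygon is a convex body, so it suffices to prove: if $A$ and $B$ are homothetic convex bodies, then $A\setminus B$ and $B\setminus A$ are arc-connected. Writing $B=\lambda A+t$, I may assume $0<\lambda\le 1$ (otherwise swap $A$ and $B$, using that the pseudodisk relation is symmetric) and $A\ne B$.

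Consider first the translation case $\lambda=1$. I sweep the plane by the family of all lines parallel to $t$. For such a line $\ell$, the sets $\ell\cap A$ and $\ell\cap B=(\ell\cap A)+t$ are intervals of equal length, the second being the first translated forward by $t$. An elementary case analysis then shows that $\ell\cap(A\setminus B)$, whenever it is non-empty, is a single sub-interval of $\ell\cap A$ that contains the endpoint of $\ell\cap A$ first met when one traverses $\ell$ in the direction of $t$; this endpoint is a point of $\partial A$. Symmetrically, $\ell\cap(B\setminus A)$, when non-empty, is a single interval containing the last endpoint of $\ell\cap B$ in the direction of $t$, a point of $\partial B$. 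As $\ell$ ranges over all lines meeting $A$, these distinguished endpoints of $\ell\cap A$ trace a connected sub-arc $\gamma$ of $\partial A$ (one of the two arcs cut off by the support lines of $A$ parallel to $t$). Then $\gamma\subseteq A\setminus B$, and every point of $A\setminus B$ is joined to $\gamma$ inside $A\setminus B$ by moving along its own line; hence $A\setminus B$ is arc-connected, and $B\setminus A$ is arc-connected by the mirror argument applied to the corresponding arc of $\partial B$.

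For $\lambda<1$ the argument is the same after replacing the family of lines parallel to $t$ by the pencil of rays emanating from the fixed point $c=t/(1-\lambda)$ of the map $x\mapsto\lambda x+t$, so that $B=c+\lambda(A-c)$. A ray $\rho$ from $c$ meets $A$ if and only if it meets $B$, and then, measuring position along $\rho$ by distance from $c$, the intervals $\rho\cap A$ and $\rho\cap B$ satisfy that $\rho\cap B$ is obtained from $\rho\cap A$ by scaling by the factor $\lambda$ towards $c$. The same bookkeeping now shows that $\rho\cap(A\setminus B)$, when non-empty, is an interval containing the endpoint of $\rho\cap A$ farther from $c$ (a point of $\partial A$), and that $\rho\cap(B\setminus A)$, when non-empty, is an interval containing the endpoint of $\rho\cap B$ nearer to $c$ (a point of $\partial B$); note that the latter can occur only when $c\notin A$, since if $c\in A$ then $B=\lambda A+(1-\lambda)c\subseteq A$ and $B\setminus A=\emptyset$. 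The far endpoints of $\rho\cap A$ sweep a connected sub-arc of $\partial A$ contained in $A\setminus B$ — here one uses that $A$ and the shrunken copy $B$ subtend the same cone of directions at $c$, hence have the same pair of tangent rays from $c$ — and every point of $A\setminus B$ reaches this arc along its own ray; symmetrically for $B\setminus A$ and the near arc of $\partial B$. This establishes arc-connectivity of both differences and hence the lemma; the general-position hypothesis is used only to guarantee that the resulting pseudodisk arrangement is non-degenerate (for instance that no edge of $A$ is parallel to $t$).

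I expect the main work to lie in the interval bookkeeping of the two middle paragraphs: verifying that, for every sweeping line or ray, the portion of $A\setminus B$ (respectively $B\setminus A$) on it is a single interval and always contains the prescribed boundary point, so that these portions glue along a connected arc of $\partial A$ (respectively $\partial B$); and checking that these prescribed boundary points really do form a connected sub-arc of the boundary. Everything else — the reduction to pairs, the use of symmetry, and the split into the cases $\lambda=1$ and $\lambda<1$ — is routine.
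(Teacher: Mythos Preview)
Your proposal is correct and shares the paper's central geometric idea: sweep the plane by rays from the centre of the homothety, degenerating to a sweep by parallel lines when $\lambda=1$. The paper handles the same two cases in the same order, even invoking the Banach fixed-point theorem to locate the centre where you simply write $c=t/(1-\lambda)$.

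Where you differ is in which characterisation of pseudodisks you target. You work directly with the defining property the paper states --- arc-connectedness of $A\setminus B$ and $B\setminus A$ --- by showing that each sweep line or ray meets the difference in a single interval anchored at a prescribed boundary point, and that these anchors trace a connected sub-arc of the boundary. The paper instead aims at the derived property that $\partial A$ and $\partial B$ cross at most twice: along each ray it records the order of the four boundary hits and argues, by a convexity contradiction, that the middle two hits cannot coincide on three or more rays. Your route is cleaner and makes explicit why the general-position hypothesis is not needed for arc-connectivity itself; the paper's argument, by contrast, leans on that hypothesis to rule out collinear triple crossings, and its ``radar'' case analysis is noticeably more informal than your interval bookkeeping.
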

\begin{proof}
Let $A$ and $B$ be two homothetic convex bodies in general position.
First we consider the case where one of them, say $A$, is smaller than the other. 
We use the Banach fixed point theorem
for a linear mapping which maps the bigger polygon to the smaller one
(such a mapping can be obtained as a composition of shift and scaling). 
The Banach theorem implies that such a mapping has a  fixed point $f$ (note that we do not need an efficient algorithm to find one). If $f$ is inside the two bodies, then $A$ lies entirely inside $B$ and therefore $A$ and $B$ are in a pseudodisk relation.
If $f$ lies on the common boundary of $A$ and $B$, then the boundaries of $A$ and $B$ share either one or infinitely many points, which contradicts 
the condition that $A$ and $B$ are in general position.

Now consider the case where $f$ lies outside the bodies.
Consider a half-line $\ell$ originating at $f$ and intersecting both $A$ and $B$.
As $A$ and $B$ are homothetic and $f$ is the fixed point of the mapping, $\ell$ encounters the boundaries of $A$ and $B$ in the following order:
\begin{enumerate}
\item The start-intersection-point of $A$;
\item The start-intersection-point of $B$ or the end-intersection-point of $A$;
\item Once again, either the start-intersection-point of $B$ or the end-intersection-point of $A$; 
\item The end-intersection-point of $B$. 
\end{enumerate}
\begin{figure}
\hfill \includegraphics{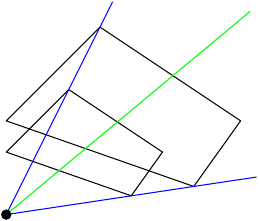}
\hfill\hfill \includegraphics{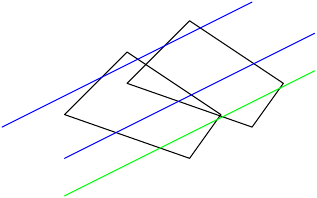}\hfill~
\caption{
Illustration to the procedure of  finding a fixed point (left) and to the behaviour of that fixed point ``in infinity'' for bodies of the same size (right).}
\end{figure}

Suppose that $A$ and $B$ are not in pseudo-disk relation, i.e., their boundaries share at least three points $x_1,x_2$ and $x_3$ (the case where the boundaries do not intersect was already considered, and the case where there is just one intersection point is forbidden for sets in general position).

Let $\ell_1,\ell_2$ and $\ell_3$ be the half-lines originating at point $f$ and containing $x_1, x_2$ and $x_3$, respectively. Moreover, assume that 
$\ell_1$ is the first and $\ell_3$ the last of these three half-lines with respect to clockwise ordering about $f$.

Consider the order in which $\ell_i$ (for $i=1,2,3$) encounters the boundaries of $A$ and $B$. Since $x_i$ belongs to the boundaries of both $A$ and $B$, and the bodies are in the general position (which implies they do not share a segment of a boundary), this means that $X_i$ is the point encountered in parts 2 and 3. Thus $\ell_i$ first encounters the start-intersection-point of $A$, then the point $x_i$ (which is simultaneously the end-intersection point of $A$ and the start-intersection-point of $B$) and then finally the end-intersection-point of $B$.

Assume first that $x_1,x_2$, and $x_3$ are co-linear. Since $A$ and $B$ are convex, the segment $x_1x_3$ belongs to both $A$ and $B$ and forms their common border. But this means that the boundaries of $A$ and $B$ share infinitely many points, which contradicts the general position condition.

Otherwise, if $x_1, x_2$, and $x_3$ do not lie on the same straight line, the segment $x_1x_3$ is contained in both $A$ and $B$ (by convexity). Let $x$ be the intersection point of $\ell_2$ and $x_1x_3$. Since $x_2$ is the end-intersection-point of $A$ and the start-intersection-point of $B$, $x$ does not belong to at least one of the sets $A$ and $B$ -- a contradiction (see Figure \ref{fig-pseudo}).

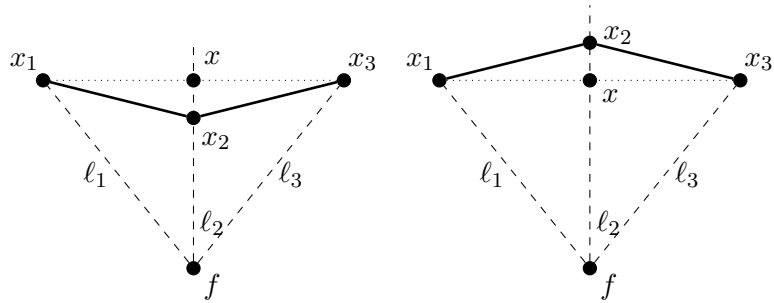
\begin{figure}[ht]
\begin{center}
\begin{tikzpicture}[xscale=0.5, yscale=0.5]
\node[draw,circle,fill=black, inner sep=0pt,minimum size=5pt] (f) at (0,0) {};
\node at (0.5,-0.5) {$f$};
\node[draw,circle,fill=black, inner sep=0pt,minimum size=5pt] (x1) at (-4,5) {};
\node at (-4.5,5.5) {$x_1$};
\node[draw,circle,fill=black, inner sep=0pt,minimum size=5pt] (x2) at (0,4) {};
\node at (0.6,3.4) {$x_2$};
\node[draw,circle,fill=black, inner sep=0pt,minimum size=5pt] (x3) at (4,5) {};
\node at (4.5,5.5) {$x_3$};
\node[draw,circle,fill=black, inner sep=0pt,minimum size=5pt] (x) at (0,5) {};
\node at (0.5,5.6) {$x$};
\draw[dashed] (f) --node[below, left]{$\ell_1$} (x1);
\draw[dashed] (f) -- (0,6);
\node at (0.5, 1.2) {$\ell_2$};
\draw[dashed] (f) --node[right]{$\ell_3$} (x3);
\draw[line width=1] (x1) -- (x2);
\draw[line width=1] (x2) -- (x3);
\draw[dotted] (x1) -- (x3);
\draw[dashed] (x2) -- (x3);
\end{tikzpicture}
\begin{tikzpicture}[xscale=0.5, yscale=0.5]
\node[draw,circle,fill=black, inner sep=0pt,minimum size=5pt] (f) at (0,0) {};
\node at (0.5,-0.5) {$f$};
\node[draw,circle,fill=black, inner sep=0pt,minimum size=5pt] (x1) at (-4,5) {};
\node at (-4.5,5.5) {$x_1$};
\node[draw,circle,fill=black, inner sep=0pt,minimum size=5pt] (x2) at (0,6) {};
\node at (0.75,6.2) {$x_2$};
\node[draw,circle,fill=black, inner sep=0pt,minimum size=5pt] (x3) at (4,5) {};
\node at (4.5,5.5) {$x_3$};
\node[draw,circle,fill=black, inner sep=0pt,minimum size=5pt] (x) at (0,5) {};
\node at (0.55,4.6) {$x$};
\draw[dashed] (f) --node[below, left]{$\ell_1$} (x1);
\draw[dashed] (f) -- (0,7);
\node at (0.5, 1.2) {$\ell_2$};
\draw[dashed] (f) --node[right]{$\ell_3$} (x3);
\draw[line width=1] (x1) -- (x2);
\draw[line width=1] (x2) -- (x3);
\draw[dotted] (x1) -- (x3);
\draw[dashed] (x2) -- (x3);
\end{tikzpicture}
\end{center}
\caption{The case where $x_1,x_2$, and $x_3$ are not co-linear. The body $A$ lies below $x_2$ and $B$ lies above $x_2$ in the direction of $\ell_2$. Thus $x \notin A$ (left) or $x \notin B$ (right).}
\label{fig-pseudo}
\end{figure}

The case of two convex bodies of the same size can be handled analogously. 
The only somewhat more essential difference is that instead of half-lines originating at $f$, one can use  parallel lines in the appropriate direction.
All arguments remain essentially the same, up to the fact that the ``fixed point'' would lie in the infinity.
This completes the  proof.
\end{proof}
}

\begin{lemma} \label{kdir-phom}
Graph classes $k$-DIR and $P_{hom}$ are essentially distinct, i.e., for any polygon $P$ and 
$k \geq 2$, 
\begin{enumerate}
 \item $k$-DIR $\not\subseteq P_{hom}$; 
 \item $P_{hom} \not\subseteq k$-DIR.
\end{enumerate}
\end{lemma}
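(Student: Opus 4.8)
The plan is to prove the two non-containments separately, and in each case it suffices to exhibit a single witnessing graph. For part~(1), I would look for a $k$-DIR graph (in fact a $2$-DIR graph, which by Theorem~\ref{kdir-kdirconv} lives in every $k\DIRCONV$ with $k\ge 2$) that cannot be represented by homothetic copies of any fixed convex polygon $P$. The natural obstruction to exploit is that homothetic copies of a fixed polygon are always in the pseudodisk relation (Lemma~\ref{lemma:homarepdisk}), so the boundaries of any two representatives cross at most twice; by contrast, two segments crossing transversally are emphatically not pseudodisks when thickened, but more to the point, one can build a $2$-DIR graph whose every representation forces some pair of polygons to interact in a ``thick X'' pattern. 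A cleaner route: take a large complete bipartite graph $K_{m,m}$ with one side realized by horizontal segments and the other by vertical segments in a grid; this is $2$-DIR. I would then argue that in any $P_{hom}$ representation of a sufficiently large grid-like subgraph, a counting/convexity argument (in the spirit of the main theorem's bound on maximal cliques, or of the Helly-type properties of homothets) yields a contradiction — e.g., a homothet family realizing $K_{m,m}$ for large $m$ forces a common point by a fractional-Helly or pseudodisk argument, which a grid representation does not have. The key technical point to nail down is exactly which small $2$-DIR graph already fails to be $P_{hom}$ for all $P$ simultaneously; I expect one can take a fixed graph (independent of $P$) such as a suitable subdivision or a blow-up of $C_4$, and show non-representability via the two-intersections-of-boundaries property.

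For part~(2), I want a $P_{hom}$ graph that is not in $k$-DIR for any fixed $k$. Here the obstruction runs the other way: $k$-DIR graphs have only $O(n^{k+1})$ maximal cliques (as recalled in the introduction for the maximum clique, and more relevantly they have polynomially many maximal cliques with the exponent tied to $k$), whereas — anticipating the later constructions mentioned in the abstract — for suitable polygons $P$ one can build $P_{hom}$ graphs with $\Omega(n^3)$ or more maximal cliques. Even more simply, $k$-DIR graphs are intersection graphs of $1$-dimensional objects along $k$ directions and hence, for instance, have bounded ``local structure'' that a dense homothetic-triangle intersection graph violates. Concretely I would exhibit, for each polygon $P$ that is not a parallelogram, a $P_{hom}$ graph whose number of maximal cliques exceeds $n^{k+1}$ for any prescribed $k$ once $n$ is large, invoking the lower-bound constructions promised later in the paper; then no $k$-DIR representation can exist. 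For the remaining case $P$ a parallelogram, I would use a direct argument: a parallelogram is an affine image of a square, so $P_{hom}=2\DIRCONV$-type graphs include, e.g., all interval graphs in a product form, and one can pick a specific small graph — again independent of $k$ is not possible since the statement is ``for any $k\ge 2$'' with $k$ fixed first — so for each fixed $k$ I produce a $P_{hom}$ graph outside $k$-DIR, for instance using that $k$-DIR is closed under taking certain minors or has bounded boxicity-like parameters that a large grid of parallelograms defeats.

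The main obstacle I anticipate is part~(1): making the argument uniform in $P$. It is easy to show a fixed $2$-DIR graph is not $P_{hom}$ for one particular $P$, but the statement asks for a single family of $2$-DIR graphs (or at least, for each $P$, some $2$-DIR graph) escaping $P_{hom}$. The pseudodisk property of homothets (Lemma~\ref{lemma:homarepdisk}) is the right lever, because it holds for \emph{every} convex $P$ simultaneously; so I would reduce the problem to exhibiting a $2$-DIR graph that is not a pseudodisk intersection graph — which is classical (segment intersection graphs are not all pseudodisk graphs, e.g.\ because a transversal crossing creates four boundary intersections that cannot be undone while keeping convexity). Thus the cleanest proof of~(1) is: find a $2$-DIR graph $G$ that is not a pseudodisk graph; since $P_{hom}\subseteq$ pseudodisk graphs for all $P$ by Lemma~\ref{lemma:homarepdisk}, $G\notin P_{hom}$ for all $P$. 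For~(2), the cliques-counting argument is robust and I expect only bookkeeping to be required. I would therefore carry out the steps in this order: (i) recall $P_{hom}\subseteq$ pseudodisk graphs and $k$-DIR $\subseteq k\DIRCONV$; (ii) exhibit and verify a small $2$-DIR non-pseudodisk graph, giving~(1); (iii) exhibit a $P_{hom}$ graph with a superpolynomial-in-the-sense-needed number of maximal cliques (or a fixed $P$-dependent small graph) outside $k$-DIR, giving~(2).
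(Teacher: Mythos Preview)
Your plan for part~(1) is sound and matches the paper's approach: reduce to showing a $2$-DIR graph is not a pseudodisk graph, then invoke Lemma~\ref{lemma:homarepdisk}. The paper's execution is crisper than the vague ``counting/convexity'' or ``fractional-Helly'' arguments you gesture at: it takes $G=K_{3,3}$, observes it is $2$-DIR (grid of three horizontal and three vertical segments), and then uses Kratochv\'{\i}l's theorem that \emph{triangle-free} pseudodisk intersection graphs are planar. Since $K_{3,3}$ is triangle-free and nonplanar, it is not a pseudodisk graph, hence not in $P_{hom}$ for any $P$. You had the right lever but not the hinge; the ``triangle-free pseudodisk $\Rightarrow$ planar'' fact is what makes the argument one line.

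Your plan for part~(2), however, has a genuine gap. The clique-counting separation cannot work in the direction you need. By the main theorem of this very paper, for a fixed convex $p$-gon $P$ every $n$-vertex $P_{hom}$ graph has at most $n^{p}$ maximal cliques. So if $P$ is, say, a triangle, \emph{every} $P_{hom}$ graph has at most $n^{3}$ maximal cliques, and you will never produce one with more than $n^{k+1}$ once $k\ge 3$. The statement requires, for each fixed $P$, a $P_{hom}$ graph outside $k$-DIR for \emph{every} $k$, including $k$ much larger than the number of sides of $P$; clique counts simply do not separate the classes in that regime.

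The paper's argument for~(2) is an elementary pigeonhole that you are missing. For given $k$, let $S_k$ be a $(2k+1)$-clique in which every clique vertex has a private pendant neighbor. One checks directly (and uniformly in $P$) that $S_k\in P_{hom}$: stack $2k+1$ slightly shifted congruent copies of $P$ so they pairwise overlap, and hang a tiny copy of $P$ off a corner of each. On the other hand, in any $k$-DIR representation some three of the $2k+1$ clique segments share a direction; being pairwise intersecting and parallel they are collinear, and then the middle one of the three cannot have a private neighbor (any segment meeting it meets one of the outer two). Hence $S_k\notin k$-DIR. This argument is short, works for every convex $P$ simultaneously, and does not rely on any later material.
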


\begin{proof}
\begin{enumerate}
\item Let us consider the graph $K_{3,3}$. Figure \ref{k33} shows that it is in 2-DIR (and thus in $k$-DIR for any $k \geq 2$). On the other hand, Kratochv\'{i}l \cite{Krat-pseudo} proved that triangle-free pseudodisk intersection graphs are planar. Since by Lemma \ref{lemma:homarepdisk} all $P_{hom}$ graphs are pseudodisk  intersection graphs, $K_{3,3}$ is not a $P_{hom}$ graph for any polygon $P$.
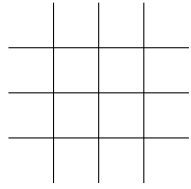
\begin{figure}[htbp]
\centering
\begin{tikzpicture}[scale=0.6]
\draw (0,1) -- (4,1);
\draw (0,2) -- (4,2);
\draw (0,3) -- (4,3);
\draw (1,0) -- (1,4);
\draw (2,0) -- (2,4);
\draw (3,0) -- (3,4);
\end{tikzpicture}
\caption{A 2-DIR representation of $K_{3,3}$. It can be clearly generalized for any $K_{n,n}$.}
\label{k33}
\end{figure}

\item Let $k$ be fixed and let $S_{k}$ be a graph consisting of a $(2k+1)$-clique, whose every vertex has a private neighbor.
Figure \ref{sun} shows $S_{k}$ and its geometric representation as a $P_{hom}$ graph for $P$ being a square and a triangle. Note that this construction can be easily generalized for any convex polygon $P$. Thus $S_k \in P_{hom}$ for any $P$.

Suppose now that $S_k \in k$-DIR. Since we have $2k+1$ vertices in the clique and only $k$ directions available, there exists a direction with at least three segments in the clique. Since those segments are pairwise parallel and intersect each other, they lie on the same line. But in such a case at most two of them may have private neighbors. Therefore $S_k \notin k$-DIR.
\end{enumerate}
\end{proof}

\begin{figure}[ht]
\centering
\begin{tikzpicture}[scale=0.6]
\tikzstyle{every node}=[draw,shape=circle,fill=white];

\foreach \i in {0,1,...,4}
{	
	\draw (72*\i:1cm)--(72*\i:2cm);

\foreach \j in {0,1,...,\i}
{	
	\draw (72*\i:1cm)--(72*\j:1cm);
}	
}
\foreach \i in {0,1,...,4}
{
    \draw[fill = black] (72*\i:1cm) circle (3 pt);
    \draw[fill = black] (72*\i:2cm) circle (3 pt);
}
\end{tikzpicture}
\hskip 50 pt
\begin{tikzpicture}[yscale=0.3, xscale=0.3]
\newcommand*\nn{5}

\foreach \x in {1,...,\nn}
{
    \draw (-1.0 + \x, 0.0 + \x/2) --++ (6,0) --++ (0,6) --++ (-6,0) -- cycle;
    \draw (-1.3 + \x, 5.7 + \x/2) --++ (0.6,0) --++ (0,0.6) --++ (-0.6,0) -- cycle;
}
\end{tikzpicture}
\hskip 50 pt
\begin{tikzpicture}[yscale=0.3, xscale=0.3]
\newcommand*\nn{5}

\foreach \x in {1,...,\nn}
{
    \draw (-1.0 + \x, 0.0 + \x/4) --++ (6,0) --++ (-6,8) -- cycle;
    \draw (-1.1 + \x, 7.8 + \x/4) --++ (0.6,0) --++ (-0.6,0.8) -- cycle;
}
\end{tikzpicture}
\caption{Graph $S_2$ and its representation by squares and by homothetic triangles.}
\label{sun}
\end{figure}
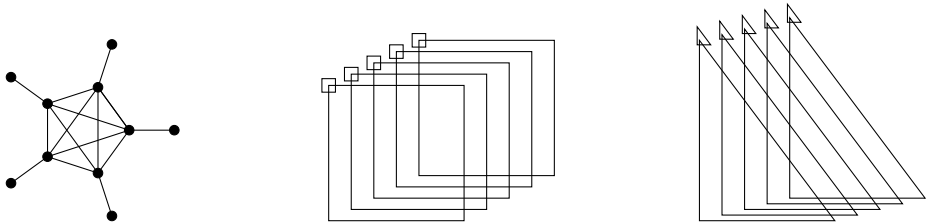

The construction from the second part of Lemma \ref{kdir-phom} exploited the fact that the number of directions of segments was fixed.
For $k\DIRCONV$ graphs we can improve the result by constructing a graph that is in $k\DIRCONV$ for any $k \geq 2$, but cannot be represented by any configuration of segments (so it is not in SEG).

\begin{theorem} \label{thm:counter}
For any $k \geq 2$ there exists a graph in $k\DIRCONV$ which is not a SEG graph.
\end{theorem}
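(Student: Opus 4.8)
The plan is to reduce the statement to producing a single, $k$-independent graph. The classes $k\DIRCONV$ are nested: if $L\subseteq L'$ then every polygon in $\P(L)$ lies in $\P(L')$, so $k\DIRCONV\subseteq k'\DIRCONV$ whenever $k\le k'$, and in particular $2\DIRCONV\subseteq k\DIRCONV$ for every $k\ge 2$. Since, as already observed, $2\DIRCONV$ is exactly the class of intersection graphs of isothetic rectangles, it suffices to exhibit one graph $G$ that is an isothetic-rectangle intersection graph but is not a SEG graph; this single $G$ then witnesses the theorem for all $k\ge 2$ simultaneously. So the first thing I would do is fix this target and forget about $k$ entirely.

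For the positive part I would simply describe $G$ by an explicit finite family of isothetic rectangles, so that $G\in 2\DIRCONV$ holds by definition — either a purpose-built gadget or, if available, a known non-SEG graph from the literature for which one can exhibit a boxicity-$2$ representation. The design of this family is driven by the negative part: I want a configuration that genuinely uses the two-dimensional extent of the boxes, e.g.\ ``plus-sign'' overlaps (two boxes meeting like a ``$+$'' are not even in the pseudodisk relation, in contrast to homothetic copies of a fixed polygon), so that, intuitively, no arrangement of one-dimensional segments can reproduce the incidence pattern.

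For the negative part — $G\notin$ SEG — I would argue by contradiction from a hypothetical segment representation $\{S_v\}_{v\in V(G)}$, exploiting the rigidity of segments that boxes lack. The key elementary facts are: two intersecting segments meet in a point or a subsegment; and hence if a ``spine'' segment $S_u$ is crossed by segments $S_{w_1},\dots,S_{w_t}$ that are pairwise disjoint (i.e.\ $w_1,\dots,w_t$ form an independent set inside $N(u)$), then the pieces $S_{w_i}\cap S_u$ are pairwise disjoint subintervals of $S_u$ and therefore impose a \emph{linear order} on $\{w_1,\dots,w_t\}$. I would build $G$ so that several spine vertices each force a linear order on a common ground set and so that these orders are provably mutually incompatible — for instance they together try to realize an orientation of a non-planar graph, or they produce a cyclic inconsistency (a triple forced to satisfy $a<b<c$, $b<c<a$ and $c<a<b$ at once). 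An equivalent way to package this is to arrange that the crossing pattern $G$ demands of the clique parts of any segment representation is that of a non-stretchable pseudoline/pseudo-segment arrangement.

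I expect the main obstacle to be exactly this last step: ruling out \emph{every} segment representation, not merely showing that the particular rectangle picture cannot be redrawn with segments. The cheap trick used in the proof of Lemma~\ref{kdir-phom} — counting directions in a $k$-DIR representation — is unavailable, because a general SEG representation may use unboundedly many directions and cliques of segments need not be concurrent (Helly fails for segments). Consequently the gadget must pin down the combinatorial type of the representation from all sides, and its verification will most likely be a finite but delicate case analysis organized by which tuples of clique-segments are concurrent; getting the gadget small enough that this analysis is manageable, while still rich enough to be non-SEG, is where the real work lies.
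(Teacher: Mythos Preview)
Your reduction to a single witness in $2\DIRCONV$ is correct and is exactly what the paper does: it exhibits one concrete graph $G$ together with an explicit isothetic-rectangle representation, so that $G\in 2\DIRCONV\subseteq k\DIRCONV$ for all $k\ge 2$.

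The gap is that you never actually produce the graph or carry out the non-SEG argument; you only outline a strategy and explicitly defer the hard part. The linear-order-along-a-spine idea is sound in principle, but turning ``several spines force mutually incompatible betweenness constraints'' into a concrete small gadget with a clean contradiction is nontrivial, as you yourself anticipate. The non-stretchability route is also valid but heavy: the smallest non-stretchable arrangements are not tiny, and you would still owe a boxicity-$2$ representation of the resulting graph.

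The paper's actual argument sidesteps both the order-incompatibility case analysis and stretchability, and is worth knowing because it is much lighter. The graph $G$ is built around an induced cycle whose representatives in any SEG drawing form a closed Jordan curve; short pendants attached at prescribed cycle vertices pin down where three mutually adjacent ``inner'' vertices $a,b,c$ must reach that curve. This forces the segment for $c$ to visit all four open regions determined by the two crossing segments for $a$ and $b$. Since a single straight segment can meet at most three of the four quadrants cut out by two crossing lines, this is an immediate contradiction. Thus the non-SEG obstruction is a one-line geometric fact about segments; all the work is in the topological frame (the cycle plus pendants) that forces the four-quadrant configuration. Compared with your plan, this trades the delicate compatibility-of-orders analysis for a single local impossibility, and the rectangle representation of the whole gadget is easy to draw.
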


\begin{proof}
Let us consider the graph $G$ in Figure \ref{counterexample2} which is inspired by construction of Kratochv\'{i}l and Matou\v{s}ek from \cite{KratMat}.
Suppose that $G$ is a SEG graph.

In any geometric representation the white
cycle is represented by a closed Jordan curve.
We will refer to it as the outer circle.
It divides the plane into two faces
-- an interior and an exterior.

The outer circle cannot be crossed by the representative of any
black vertex. Moreover, as two black vertices are adjacent and therefore their
representatives intersect each other, they have
to be represented in the same face (with respect to the outer circle).
Therefore, along this circle the representatives of gray vertices appear in
a prescribed ordering. This implies the ordering in which some
part of representatives of the black vertices occur.

The vertices $a$ and $b$ are represented as two mutually intersecting segments.
{Note that gray neighbors of $a$ and $b$ appear alternately (with respect to the outer circle) and partition the face containing $a$ and $b$ among the four quadrants. The segment representing $c$ must cross segments representing $a$ and $b$. The vertex $c$ has four gray neighbors (thus its segment must intersect their segments) and each of them has to appear in a different quadrant defined by the gray neighbors of $a$ and $b$ (as they have no neighbors except for $c$ and the vertices on the outer circle). However, this is impossible as a single segment may meet at most three quadrants.
}

Thus $G$ is not a SEG graph. On the other hand, $G$ is a $2\DIRCONV$-graph (see Figure \ref{counterexample2}) and therefore a $k\DIRCONV$ graph for any $k \geq 2$.
\end{proof}

\begin{figure}[ht]
\centering
\begin{center}
\tiny
\begin{tikzpicture}[scale=0.7]
\tikzstyle{every node}=[draw, shape = circle]

\node (o1) at (0,3) {};
\node (o2) at (1,2.6) {};
\node (o3) at (2,2) {};
\node (o4) at (2.6,1) {};
\node (o5) at (3,0) {};
\node (o6) at (2.6,-1) {};
\node (o7) at (2,-2) {};
\node (o8) at (1,-2.6) {};
\node (o9) at (0,-3) {};
\node (o10) at (-1,-2.6) {};
\node (o11) at (-2,-2) {};
\node (o12) at (-2.6,-1) {};
\node (o13) at (-3,0) {};
\node (o14) at (-2.6,1) {};
\node (o15) at (-2,2) {};
\node (o16) at (-1,2.6) {};

\node[fill = black] (a) at (0,0) {\color{white}{c}}; 
\node[fill = black] (b) at (-1,0) {\color{white}{b}}; 
\node[fill = black] (c) at (0,1) {\color{white}{a}}; 

\node[fill = gray] (a3) at (1.2,1.2) { };
\node[fill = gray] (a7) at (1.2,-1.2) { };
\node[fill = gray] (a11) at (-1.2,-1.2) { };
\node[fill = gray] (a15) at (-1.2,1.2) { };
\node[fill = gray] (b1) at (-0.5,1.5) { };
\node[fill = gray] (b9) at (-0.5,-1.5) { };
\node[fill = gray] (c5) at (1.5,0.5) { };
\node[fill = gray] (c13) at (-1.5,0.5) { };
\draw (o1)--(o2)--(o3)--(o4)--(o5)--(o6)--(o7)--(o8)--(o9)--(o10)--(o11)--(o12)--(o13)--(o14)--(o15)--(o16)--(o1);
\draw (a) -- (a3) -- (o3);
\draw (a) -- (a7) -- (o7);
\draw (a) -- (a11) -- (o11);
\draw (a) -- (a15) -- (o15);
\draw (b) -- (b1) -- (o1);
\draw (b) -- (b9) -- (o9);
\draw (c) -- (c5) -- (o5);
\draw (c) -- (c13) -- (o13);
\draw (a) -- (b) -- (c) -- (a);
\end{tikzpicture}
\hskip 10 pt
\begin{tikzpicture}[scale=0.4]
\draw (-2,-2) -- (2,-2) -- (2,2) -- (-2,2) -- cycle;
\draw (-1,-3) -- (1, -3) -- (1, 3) -- (-1, 3) -- cycle;
\draw (-3,-1) -- (3,-1) -- (3,1) -- (-3,1) -- cycle;

\draw (-6,4) --++ (4.5,0) --++ (0,1) --++ (-4.5,0) -- cycle;
\draw (-1.5,6) --++ (3,0) --++ (0,1) --++ (-3,0) -- cycle ;
\draw (1.5,4) --++ (4.5,0) --++ (0,1) --++ (-4.5,0) -- cycle;
\draw (6,-2) --++ (1,0) --++ (0,4) --++ (-1,0) -- cycle;
\draw (-1.5,-7) --++ (3,0) --++ (0,1) --++ (-3,0) -- cycle ;
\draw (1.5,-5) --++ (4.5,0) --++ (0,1) --++ (-4.5,0) -- cycle;
\draw (-7,-2) --++ (1,0) --++ (0,4) --++ (-1,0) -- cycle;
\draw (-6,-5) --++ (4.5,0) --++ (0,1) --++ (-4.5,0) -- cycle;

\draw (1,4.5) --++ (1,0) --++ (0,2) --++ (-1,0) -- cycle;
\draw (5.5,1.5) --++ (1,0) --++ (0,3) --++ (-1,0) -- cycle;
\draw (5.5,-4.5) --++ (1,0) --++ (0,3) --++ (-1,0) -- cycle;
\draw (1,-6.5) --++ (1,0) --++ (0,2) --++ (-1,0) -- cycle;
\draw (-2,-6.5) --++ (1,0) --++ (0,2) --++ (-1,0) -- cycle;
\draw (-6.5,-4.5) --++ (1,0) --++ (0,3) --++ (-1,0) -- cycle;
\draw (-6.5,1.5) --++ (1,0) --++ (0,3) --++ (-1,0) -- cycle;
\draw (-2,4.5) --++ (1,0) --++ (0,2) --++ (-1,0) -- cycle;

\draw (-0.1, 2.75) --++ (0.2,0) --++ (0, 3.5) --++ (-0.2,0) -- cycle;
\draw (-1.8,1.75) --++ (0.2,0) --++ (0,2.5) --++ (-0.2,0) -- cycle;
\draw (2.75, -0.1) --++ (3.5,0) --++ (0, 0.2) --++ (-3.5,0) -- cycle;
\draw (1.8,1.75) --++ (-0.2,0) --++ (0,2.5) --++ (0.2,0) -- cycle;
\draw (-0.1, -2.75) --++ (0.2,0) --++ (0, -3.5) --++ (-0.2,0) -- cycle;
\draw (-1.8,-1.75) --++ (0.2,0) --++ (0,-2.5) --++ (-0.2,0) -- cycle;
\draw (-2.75, -0.1) --++ (-3.5,0) --++ (0, 0.2) --++ (3.5,0) -- cycle;
\draw (1.8,-1.75) --++ (-0.2,0) --++ (0,-2.5) --++ (0.2,0) -- cycle;
\end{tikzpicture}
\end{center}
\caption{A graph $G$ and its representation as a $2\DIRCONV$ graph.}
\label{counterexample2}
\end{figure}
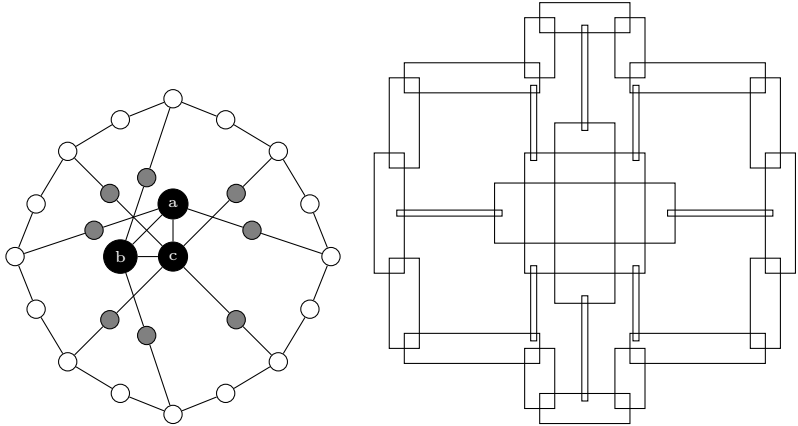

\sloppypar We conclude this section by exhibiting the relationship of $P_{hom}$ and $k\DIRCONV$ graphs.
It is clear that for every $k$ and a polygon $P \in \P(k)$, we have $P_{hom} \subseteq k\DIRCONV$. The corollary below shows that there are graphs which are in $2\DIRCONV$, but not in $P_{hom}$ for any convex polygon $P$.

\begin{proposition} \label{lem:prelphom}
$k\DIRCONV \not\subseteq P_{hom}$ for any convex polygon $P$ and  $k\geq2$.
\end{proposition}
\begin{proof}
Let us again consider the graph $G$ depicted in Figure \ref{counterexample2}
and a similar argument to what we already used.
Any intersection representation of this graph by convex polygons requires
two polygons whose boundaries intersect at least four times. As we can see in Figure \ref{counterexample2}, this is not a problem for $2\DIRCONV$ graphs, 
but our Lemma \ref{lemma:homarepdisk} shows that any representation
by homothetic convex polygons forms an arrangement of pseudodisks.
Since the boundaries of two pseudodisks may intersect at most twice, our
graph is not a $P_{hom}$ graph for any $P$.
\end{proof}

\section{The number of maximal cliques}

In this section we are interested in upper and lower bounds for the maximum number of maximal cliques in $k\DIRCONV$ and $P_{hom}$ graphs. {
The maximal clique problem finds numerous applications in telecommunication and social network analysis, error-correcting codes design, 
fault diagnosis on large multiprocessor systems, computer vision, and pattern recognition. See \cite{bomze} for a detailed presentation of the above, 
\cite{Brimkov}   for more references and discussion on applications of the maximal clique problem on geometric graphs to imaging sciences, 
and \cite{SODA,tomita,rhodes} and the bibliography therein for applications to research in bioinformatics and computational chemistry.
 The problem of enumerating all maximal cliques and finding the maximum clique is discussed in Section \ref{sec:algorithms}.

\subsection{Upper bound}
The following theorem is the main result of this section.
\begin{theorem}  \label{thm:max-dirconv}
Let $G$ be a $k\DIRCONV$ graph with $n$ vertices. Then $G$ has at most $n^k$ maximal cliques. 
\end{theorem}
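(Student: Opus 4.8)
### Proof proposal

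The plan is to encode each maximal clique by a bounded-size "combinatorial fingerprint" built from the geometry of a representation, and then to argue that distinct maximal cliques must receive distinct fingerprints, so that the number of maximal cliques is bounded by the number of possible fingerprints. Fix a representation of $G$ by polygons $\{R_v : v \in V(G)\}$ with all sides parallel to the $k$ lines in $L = \{\ell_1,\dots,\ell_k\}$, and fix a direction $\vec{u}_i$ for each $\ell_i$. The starting observation is the standard Helly-type fact for such families: a collection of polygons from $\P(L)$ has a common point if and only if every pair of them intersects. This should follow from Lemma~\ref{lem:separate2} together with the one-dimensional Helly theorem applied direction by direction — if no common point exists, one of the $k$ "slab" projections (onto a line perpendicular to some $\ell_i$) has empty intersection, hence by Helly on the line two of the projected intervals are disjoint, and then those two polygons can be separated by a line parallel to $\ell_i$. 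Consequently, a clique $Q$ of $G$ is exactly a subfamily whose members pairwise intersect, which is exactly a subfamily with $\bigcap_{v\in Q} R_v \neq \emptyset$.

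Now for a maximal clique $Q$, consider the nonempty convex region $K_Q := \bigcap_{v \in Q} R_v$. This region is a convex polygon with all sides parallel to lines in $L$. For each direction index $i \in \{1,\dots,k\}$, look at the supporting line of $K_Q$ that is parallel to $\ell_i$ and is "extreme" in the $\vec{u}_i^{\perp}$ direction (say, the one bounding $K_Q$ on the positive side). That supporting line must contain a side or a vertex of some $R_v$ with $v \in Q$ — indeed, $K_Q$ cannot be pushed further in that direction precisely because some representative $R_v$ blocks it, i.e. $R_v$ has a side parallel to $\ell_i$ that realizes this extreme value (or a vertex does). Record for direction $i$ the vertex $v = f_i(Q) \in Q$ whose representative realizes this extreme supporting line (breaking ties by any fixed rule). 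This gives a function assigning to $Q$ the $k$-tuple $(f_1(Q),\dots,f_k(Q)) \in V(G)^k$, hence at most $n^k$ possible values.

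The heart of the argument is then injectivity: if $Q$ and $Q'$ are maximal cliques with the same $k$-tuple $(v_1,\dots,v_k)$, then $Q = Q'$. The point is that the $k$ extreme supporting lines (one per direction) of $K_Q$ bound a polygon $K^* \supseteq K_Q$ which is determined solely by $v_1,\dots,v_k$ — each $v_i$ determines the $i$-th extreme supporting line — and symmetrically $K^* \supseteq K_{Q'}$; in fact one wants to show $K_Q$ and $K_{Q'}$ share a common point $z \in K^*$. Then for any $w \in Q$, the representative $R_w$ contains $K_Q \ni z'$ for a suitable point, and one argues $z' $ can be taken in $K_{Q'}$ as well, so $R_w$ meets every $R_{u}$, $u \in Q'$, whence $Q' \cup \{w\}$ is a clique; maximality of $Q'$ forces $w \in Q'$, giving $Q \subseteq Q'$ and by symmetry $Q = Q'$. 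The delicate point — and the step I expect to be the main obstacle — is making precise that the two "reduced" regions $K_Q$ and $K_{Q'}$ actually overlap given only that they have the same extreme-supporting-line data in all $k$ directions; this needs the observation that $K_Q$ is squeezed on all sides exactly by the $v_i$'s, so $K_Q$ and $K_{Q'}$ are both "maximal" convex polygons fitting inside $K^*$ and touching each bounding line, and any two such must intersect (again a direction-by-direction / Helly argument on $K^*$). I would handle this by showing each of $K_Q, K_{Q'}$ contains the point of $K^*$ that is extreme in some generic direction, or more robustly by re-deriving the common point from pairwise intersections: every $R_v$ with $v\in Q\cup\{v_1',\dots\}$ still pairwise-intersects appropriately. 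Care is also needed with tie-breaking and with the degenerate cases where $K_Q$ is lower-dimensional (a segment or a point), but these only make the region smaller and the argument easier.
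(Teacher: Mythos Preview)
Your proposal has a genuine gap at the very first step. The Helly-type claim --- that pairwise intersecting polygons in $\P(L)$ must share a common point --- is \emph{false} already for $k=3$. Take $L$ to be the three side-directions of the triangle $T=\{(x,y): x\ge 0,\ y\ge 0,\ x+y\le 1\}$; then the three translates $T$, $T+(0.6,0)$, $T+(0,0.6)$ pairwise intersect, yet their common intersection is $\{x\ge 0.6,\ y\ge 0.6,\ x+y\le 1\}=\emptyset$. Your slab-projection justification breaks because a polygon in $\P(L)$ is in general \emph{not} the intersection of its $k$ slabs (a triangle contributes only one half-plane per direction, not two), so having all $k$ one-dimensional projections intersect does not force a common point of the polygons themselves --- and indeed in the example above all three directional projections do have nonempty common intersection. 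Since both your fingerprint $(f_1(Q),\dots,f_k(Q))$ and your entire injectivity argument are built on the region $K_Q=\bigcap_{v\in Q} R_v$, which can be empty for a clique $Q$, the approach as written collapses.

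The paper's proof uses essentially the same $k$-tuple fingerprint but defines it without any reference to $K_Q$: for each $i$, pick $R_i\in M$ to be a polygon hit last by a line sweeping perpendicularly to $\ell_i$, and let $r_i$ be its supporting line parallel to $\ell_i$. Injectivity is then a two-line argument rather than the delicate overlap analysis you anticipated: the set $\widetilde{M}$ of all represented polygons meeting every $r_i$ contains $M$ and is itself a clique, because any two disjoint members of $\widetilde{M}$ would by Lemma~\ref{lem:separate2} be separated by a line parallel to some $\ell_j$, yet both cross the parallel line $r_j$, which is impossible. Maximality then gives $M=\widetilde{M}$, so the tuple $(R_1,\dots,R_k)$ determines $M$.
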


\begin{proof}
Fix a representation of $G = (V,E)$ by intersecting polygons, whose every side is parallel to one of the lines $\{\ell_1,\ldots \ell_k\}$. For every $i\in \{1,\ldots,k\}$ let $w_i$ be an arbitrary vector perpendicular to $\ell_i$ (so for each line we can choose the direction of $w_i$ in two ways).

 Let $M$ be a maximal clique in $G$. For $i \in \{1,\ldots,k\}$, let $R_i$ be the last polygon in $M$ found by a sweeping line in the direction of $w_i$ (if there is more than one  such a polygon, we choose an arbitrary one). Let $r_i$ be the supporting line of $R_i$ perpendicular to $w_i$ (which is exactly the sweeping line in direction of $w_i$ that found $R_i$). 
Define $\widetilde{M}=\{R\in V: \forall i\in \{1,\ldots,k\} \; R \text{ intersects } r_i\}$. We shall prove that $M = \widetilde M$.

First let us show that $M\subseteq \widetilde{M}$. Let $R$ be a polygon in $M$ and consider some $i\in \{1,\ldots,k\}$. The polygon $R$ cannot be farther than $r_i$ in the direction $w_i$, because $r_i$ is the supporting line of $R_i$, which is the last polygon in $M$ in direction of $w_i$. On the other hand, the polygon $R$ cannot lie entirely before $r_i$ (again in the direction of $w_i$), because then $r_i$ would separate two polygons $R$ and $R_i$ in $M$. Hence $R$ intersects $r_i$ for every $i\in \{1,\ldots,k\}$ and thus it belongs to $\widetilde{M}$. 

Now we will show that $\widetilde{M}$ is a clique, which, combined with the fact that $\widetilde{M}$ contains the maximal clique $M$, implies that $\widetilde{M}=M$. 
Suppose there exist polygons $Q,R\in \widetilde{M}$ such that $Q\cap R=\emptyset$. {Since $Q$ and $R$ are convex, by Lemma \ref{lem:separate2} they} can be separated by a line $\ell$, which is parallel to one of the faces of $Q$ or $R$ and thus to some $\ell_j \in \{\ell_1,\ldots,\ell_k\}$.
Since $Q,R \in \widetilde M$, we know that both $Q$ and $R$ intersect $r_j$. Hence we get two parallel lines $\ell$ and $r_j$, one separating $Q$ and $R$ and the other intersecting both of them. It is easy to verify that this is not possible, so $\widetilde M$ is a maximal clique.

Note that each maximal clique is uniquely determined by the choice of $r_1,r_2,\ldots,r_k$. For each $i$, every $r_i$ is determined by $R_i$, which can be chosen in at most $n$ ways. So finally, the number of maximal cliques in $G$ is at most $n^k$.
\end{proof}

Theorem~\ref{thm:max-dirconv} implies the following corollary.

\begin{corollary}
Let $P$ be a convex polygon in $\P(k)$. Then any $n$-vertex graph in $P_{hom}$ has at most $n^k$ maximal cliques.
\end{corollary}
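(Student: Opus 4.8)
The plan is to deduce the corollary almost immediately from Theorem~\ref{thm:max-dirconv}, by showing that $P_{hom}\subseteq k\DIRCONVk{L}$ for a suitable fixed $k$-tuple of directions $L$, and then invoking the $n^k$ bound. Concretely: fix a convex polygon $P\in\P(k)$, so by definition every side of $P$ is parallel to one of some $k$ lines through the origin, say $L=\{\ell_1,\dots,\ell_k\}$; thus $P\in\P(L)$. The first step is to observe that this property is preserved under homothety. A homothety is a composition of a translation and a uniform scaling; neither operation changes the direction of any side (scaling multiplies every edge vector by a fixed nonzero scalar, translation leaves edge vectors unchanged). Hence every polygon homothetic to $P$ also lies in $\P(L)$.

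Given that, the second step is just bookkeeping: if $G$ is an $n$-vertex graph in $P_{hom}$, it has an intersection representation by polygons each homothetic to $P$, and by the previous paragraph each such polygon belongs to $\P(L)$. Therefore this same representation witnesses $G\in k\DIRCONVk{L}\subseteq k\DIRCONV$. Applying Theorem~\ref{thm:max-dirconv} to $G$ as a $k\DIRCONV$ graph yields that $G$ has at most $n^k$ maximal cliques, which is exactly the claim.

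There is essentially no obstacle here; the only point requiring a word of care is the direction-invariance of sides under homothety, and even that is elementary. Strictly speaking one should note the degenerate possibility that $P$ has fewer than $k$ ``effective'' directions (e.g.\ two parallel sides counted separately, or $P$ not using all of $\ell_1,\dots,\ell_k$), but this only makes the bound stronger and does not affect the argument, since $\P(L)$ is defined so that each side is parallel to \emph{some} $\ell\in L$, without requiring all lines of $L$ to be used. So the proof is a two-line reduction: membership $P_{hom}\subseteq k\DIRCONV$ for $P\in\P(k)$, followed by Theorem~\ref{thm:max-dirconv}.
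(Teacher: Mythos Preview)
Your proof is correct and follows exactly the approach the paper intends: the corollary is stated immediately after Theorem~\ref{thm:max-dirconv} without a separate proof, relying precisely on the observation (made earlier in the paper) that for any convex polygon $P$ with sides parallel to at most $k$ directions one has $P_{hom}\subseteq k\DIRCONV$. Your explicit remark that homotheties preserve side directions is the only missing sentence, and it is exactly the right justification.
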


\subsection{Lower bounds}
In this section we obtain lower bounds for the maximum number of maximal cliques in $k\DIRCONV$ graphs and the subclasses of this class.
First we focus on $k$-DIR graphs (recall from Theorem \ref{kdir-kdirconv} that $k$-DIR $\subseteq k\DIRCONV$).

\begin{theorem} \label{thm:kdir-lower}
For any $k \geq 2$, the maximum number of maximal cliques over all $n$-vertex graphs in $k$-DIR is $\Omega(n^{k(1-\epsilon)})$,  for any $\epsilon > 0$.
Moreover, if $k$ is a constant, then the bound is $\Omega(n^{k})$.
\end{theorem}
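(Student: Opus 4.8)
The plan is to construct, for each $k \geq 2$, an explicit $k$-DIR representation whose intersection graph has many maximal cliques, by taking a "product" of $k$ one-dimensional gadgets, one per direction. For the base case $k=1$ (interval graphs) one cannot do better than a linear number of maximal cliques, so the idea is instead to start from a robust two-direction construction and then stack additional directions multiplicatively. Concretely, for a single direction $\ell_i$ I would place a family of $m$ long segments parallel to $\ell_i$ arranged so that, together with the segments in the other directions, choosing which segment is "extremal" in direction $\ell_i$ is an essentially free and independent choice; the total number of maximal cliques is then on the order of $m^k$, and since the graph has $n = \Theta(km)$ vertices (or $n=\Theta(km)$ up to lower-order correction terms), this gives $\Omega(n^{k(1-\epsilon)})$ in general and $\Omega(n^k)$ when $k$ is an absolute constant.

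The first step is to handle $k=2$: I would exhibit a $2$-DIR graph on $n$ vertices with $\Omega(n^{2})$ maximal cliques (when $k=2$ is constant), e.g. by taking $n/2$ horizontal and $n/2$ vertical segments positioned so that every horizontal–vertical pair crosses, but with a slight staggering ensuring that every inclusion-maximal clique is obtained by picking the rightmost horizontal segment and the topmost vertical segment it is compatible with — the precise geometry should make these two choices (nearly) independent, yielding $\Omega((n/2)^2)$ maximal cliques. The crucial feature to engineer is that the maximal cliques are in bijection (or near-bijection) with tuples of "extremal" segments, one per direction, mirroring the mechanism of the upper bound proof in Theorem~\ref{thm:max-dirconv}; this guarantees the count is genuinely quadratic and not killed by containment among the candidate cliques.

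The second step is the inductive stacking: given a $k$-DIR construction realizing $\Omega(m^{k})$ maximal cliques on $\Theta(km)$ vertices, I would add a fresh direction $\ell_{k+1}$ with $m$ new parallel segments placed "far enough out" so that each new segment meets every old segment (making the new segments universal), while among themselves the new segments are arranged (again with a staggering trick) so that which one is extremal in direction $w_{k+1}$ can be chosen independently of the previous $k$ extremal choices. Since the new segments are universal, every old maximal clique extends, and the number of maximal cliques multiplies by $\Omega(m)$, giving $\Omega(m^{k+1})$ on $\Theta((k+1)m)$ vertices. Setting $m = n/(k+1)$ yields the bound; for fixed $k$ this is $\Omega(n^{k})$, and for $k$ growing slowly with $n$ the polynomial loss in the constant $(k+1)^{k}$ is absorbed into the $n^{-\epsilon k}$ slack, giving $\Omega(n^{k(1-\epsilon)})$.

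The main obstacle I anticipate is ensuring \emph{independence} of the extremal choices across directions while keeping all cross-direction pairs intersecting: it is easy to make all segments of different directions cross, but then many of the candidate "extremal tuples" may fail to be distinct maximal cliques (one clique containing another), collapsing the count. The fix is a careful perturbation: give the $j$-th segment in direction $i$ a position offset by a tiny $\varepsilon_i \cdot f(j)$ so that no two candidate cliques are nested — essentially a lexicographic staggering across the $k$ directions — and then verify via Lemma~\ref{lem:separate2} that each tuple of extremal segments does determine a genuine maximal clique. Checking this non-nestedness rigorously (that distinct tuples give distinct maximal cliques, and each is indeed inclusion-maximal) is the technical heart of the argument; everything else is bookkeeping on the vertex count.
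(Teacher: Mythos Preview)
Your construction is essentially the paper's, but you have overcomplicated the analysis and are worrying about a non-problem. The paper simply takes $n/k$ pairwise \emph{disjoint} segments in each of $k$ directions, placed so that any two segments of \emph{different} directions cross; the resulting intersection graph is then the complete $k$-partite graph $K_{n/k,\ldots,n/k}$. In a complete $k$-partite graph every maximal clique is a transversal (exactly one vertex from each part) and conversely, so there are precisely $(n/k)^k$ maximal cliques --- no staggering, no perturbation, no inductive stacking, and no appeal to the extremal-tuple mechanism of Theorem~\ref{thm:max-dirconv} is needed. There is no danger of ``one clique containing another'' because all maximal cliques have the same size~$k$.

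Your anticipated ``main obstacle'' stems from not recognising this structure. Once same-direction segments are pairwise disjoint and cross-direction segments all intersect, the graph is completely determined and the count is immediate. The machinery you propose (lexicographic offsets $\varepsilon_i\cdot f(j)$, verification via Lemma~\ref{lem:separate2}, checking non-nestedness of candidate cliques) would presumably work, but it is solving a difficulty that is not there; the paper's entire argument is three sentences plus the computation $(n/k)^k = \Omega(n^{k-\log k}) = \Omega(n^{k(1-\epsilon)})$.
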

\begin{proof}
Let $n$ be divisible by $k$. Let $ \{\ell_1,..,\ell_k\}$ be a set of pairwise non-parallel lines. For $i\in \{1,.., k\}$ let $S_i$ be a segment parallel to $\ell_i$. We make $\frac{n}{k}$ copies of $S_i$ for every $i\in\{1,.., k\}$. Let $S_{i,s}$ be the $s$-th copy of $S_i$. We place all segments $S_{i,s}$ for $i\in\{1,.., k\}$, $s\in\{1,.., \frac{n}{k}\}$, in such a way that $S_{i,s}$ and $S_{j,t}$ ($i,j\in\{1,.., k\}$, $s,t\in\{1,.., \frac{n}{k}\}$) intersect if and only if $i\neq j$ (see Figure \ref{lower-kdir}).

Let $G(n,k)$ be the intersection graph of $\{S_{i,s}: i\in\{1,.., k\}, s\in\{1,.., \frac{n}{k}\} \}$. Notice that $G(n,k)$ is a complete $k$-partite graph $K_{\frac{n}{k},\ldots, \frac{n}{k}}$ and every maximal clique contains exactly one of $S_{i,1},.., S_{i,\frac{n}{k}}$ for every $i\in \{1,..,k\}$. Hence the number of maximal cliques in $G$ is $(\frac{n}{k})^k=\Omega(n^{k-\log k })=\Omega(n^{k(1-\epsilon)})$ for any $\epsilon > 0$.

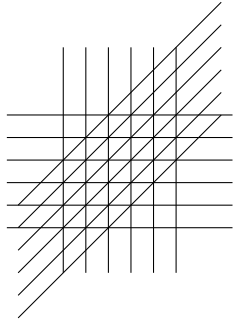
\begin{figure}[ht]
\begin{center}
\begin{tikzpicture}[xscale=0.3, yscale=0.3]
\newcommand*\nn{6}
\newcommand*\rea{--++(0,10)}
\newcommand*\reb{--++(10,0)}
\newcommand*\rec{--++(9,9)}

 \foreach \x in {1,...,\nn}
{
    \draw[color = black] (0.0 + 1* \x, 0.0) \rea;
    \draw[color = black] (-1.5, 1.0 + 1 * \x) \reb;
    \draw[color = black] (-1.0, -3.0 + 1 * \x) \rec;
}
\end{tikzpicture}
\end{center}
\caption{Construction for $k=3$ and $n=18$.}
\label{lower-kdir}
\end{figure}
\end{proof}

We can also obtain the same bound as in Theorem \ref{thm:kdir-lower}, but for $P_{hom}$ graphs, where $P$ is a $2k$-gon in $\P(k)$, i.e., $P$ has $k$ pairs of parallel sides.

\begin{theorem} \label{thm:2kphom-lower}
For any $2k$-gon $P \in \P(k)$ there exists an $n$-vertex graph $G\in \phom$ with $\Omega(n^{k(1-\epsilon)})$ maximal cliques,  for any $\epsilon > 0$.
Moreover, if $k$ is a constant, then the bound is $\Omega(n^{k})$.
\end{theorem}

\begin{proof}
Let $L = \{\ell_1,\ell_2,\ldots,\ell_k\}$ be the directions of sides of $P$, i.e., $P \in \P(L)$.
For a copy $P'$ of $P$, by $s_{1,d}(P')$ and $s_{2,d}(P')$ we shall denote the two sides of $P'$, which are parallel to $\ell_d$.

For $d\in \{1,2,..,k\}$ and $i\in \{1,\ldots, \frac{n}{k}\}$ let $L_{d,i}$ and $R_{d,i}$ be a pair of two disjoint copies of $P$, such that sides $s_{1,d}(L_{d,i})$ and $s_{2,d}(R_{d,i})$ are parallel and very close to each other in the sense that one can be obtained from the other by a translation by a short vector perpendicular to $\ell_d$.

Pairs $(L_{d,1},R_{d,1}), \ldots , (L_{d,\frac{n}{k}},R_{d,\frac{n}{k}})$ are placed in such a way that the following conditions are satisfied (see Figure \ref{lower6}, left):
\begin{itemize}
\item $L_{d,i}$ intersects $R_{d,j}$ for all $i,j\in \{2,\ldots \frac{n}{k}\}$ such that $j < i$,
\item $L_{d,i}$ intersects $L_{d,j}$ and $R_{d,i}$ intersects $R_{d,j}$ for all $i,j\in \{2,\ldots \frac{n}{k}\}$.
\end{itemize}

Observe that in the intersection graph, the vertices $L_{d,1},\ldots L_{d,\frac{n}{k}}$ (and also $R_{d,1},\ldots R_{d,\frac{n}{k}}$) form a clique.

Let $G_d$ denote the subgraph induced by $\{L_{d,i},R_{d,i} \colon i\in \{1,\ldots ,\frac{n}{k}\}\}$. In $G_d$ all maximal cliques are of the form: $\{L_{d,i},L_{d,i+1},\ldots,L_{d,\frac{n}{k}},R_{d,1},\ldots R_{d,i-1} \}$, so there are $\frac{n}{k}$ maximal cliques in $G_d$ for every $d\in \{1,2,..,k\}$.

Notice that polygons $L_{d,i}$, $R_{d,i}$ can be placed in such a way that $L_{d_1,i}$ and $L_{d_2,j}$ (and, by symmetry  $R_{d_1,i}$ and $R_{d_2,j}$) intersect each other for all distinct  $d_1,d_2\in \{1,,..,k\}$ and all $i,j\in \{1,\ldots,\frac{n}{k}\}$ (see Figure \ref{lower6}, right).

Hence every maximal clique in $G$ is a disjoint union of $k$ cliques, each in some $G_d$ for $d\in \{1,2,..,k\}$. Therefore, the number of maximal cliques is $(\frac{n}{k})^k=\Omega(n^{k-\log k })=\Omega(n^{k(1-\epsilon)})$ for any $\epsilon > 0$.
\end{proof}

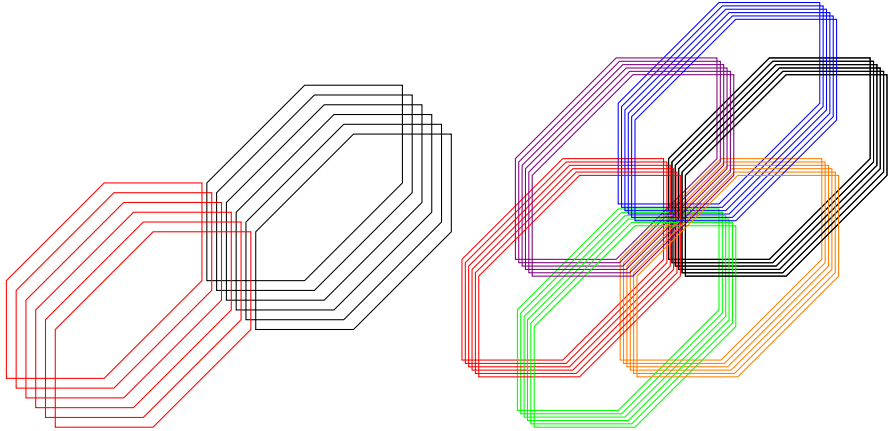
\begin{figure}[ht]
\begin{center}
\begin{tikzpicture}[scale=1.3]
\newcommand*\nn{5}
\newcommand*\x{0}
\newcommand*\f{10}

\newcommand*\pv{--++(1,0)--++(1,1)--++(0,1)--++(-1,0)--++(-1,-1)--cycle}

\foreach \x in {0,...,\nn}
 {
    \draw[color=black, scale = 1] (0 + \x/\f,0.5-\x/\f) \pv;
	\draw[color=red, scale = 1] (-2.05 + \x/\f,-0.5-\x/\f) \pv;    
 }
\end{tikzpicture}
\begin{tikzpicture}[scale=1.34]
\newcommand*\nn{5}
\newcommand*\x{0}
\newcommand*\f{30}

\newcommand*\pv{--++(1,0)--++(1,1)--++(0,1)--++(-1,0)--++(-1,-1)--cycle}

\foreach \x in {0,...,\nn}
 {
    \draw[color=black, scale = 1] (0 + \x/\f,0.5-\x/\f) \pv;
	\draw[color=red, scale = 1] (-2.05 + \x/\f,-0.5-\x/\f) \pv;    
	\draw[color=blue, scale = 1] (-0.5 + \x/\f,1.05-\x/\f) \pv;
	\draw[color=green, scale = 1] (-1.5 + \x/\f,-1-\x/\f) \pv;     
	\draw[color=black, scale = 1] (0 + \x/\f,0.5-\x/\f) \pv;
	\draw[color=orange, scale = 1] (-0.48 + \x/\f,-0.5-\x/\f) \pv;
	\draw[color=violet, scale = 1] (-1.52 + \x/\f,0.5-\x/\f) \pv;
 }

\end{tikzpicture}
\end{center}
\caption{{\em Left:} A placement of polygons $L_{d,i}$ and $R_{d,i}$ in the construction from Theorem \ref{thm:2kphom-lower}. {\em Right:} The representation of $G$.}
\label{lower6}
\end{figure}

Observe that the construction in fact works for $P_{translate}$ graphs, as we do not use any scaling. Also note that Theorem \ref{thm:2kphom-lower} gives a tight bound for $P$ being a parallelogram {(as it is a $2\DIRCONV$ graph)}. 

We can provide a very similar construction for all regular polygons, even if the number of sides is odd. However, in this case the number of maximal cliques is much lower { with respect to the number of directions of sides} (although still increasing with $k$).

\begin{theorem} \label{thm:k-reg-phom-lower}
For any $k$ and any regular $k$-gon $P$ there exists an $n$-vertex graph $G\in \phom$ with $\Omega(n^{\lfloor \frac{k}{2}\rfloor(1-\epsilon)})$ maximal cliques,  for any constant $\epsilon > 0$.
Moreover, if $k$ is a constant, then a bound $\Omega(n^{\lfloor \frac{k}{2}\rfloor})$ holds.
\end{theorem}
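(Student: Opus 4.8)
The plan is to reuse the architecture of the proof of Theorem~\ref{thm:2kphom-lower}. We build $m:=\lfloor k/2\rfloor$ pairwise ``completely adjacent'' gadgets $G_1,\dots,G_m$, each being a half-graph on $\Theta(n)$ translated copies of $P$ with $\Theta(n)$ maximal cliques, and then observe that every maximal clique of $G:=G_1\cup\dots\cup G_m$ restricts to a maximal clique of each $G_d$, so that $G$ has $\prod_{d=1}^m\Theta(n)=\Theta\big((n/(2m))^m\big)$ maximal cliques. If $k=2m$ is even, then a regular $2m$-gon is a $2m$-gon lying in $\P(m)$, so the statement is immediate from Theorem~\ref{thm:2kphom-lower} applied with parameter $m$; hence from now on we assume $k=2m+1$ is odd, so $\lfloor k/2\rfloor=m$.

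The only property of a $2\kappa$-gon exploited in Theorem~\ref{thm:2kphom-lower} is that its sides come in $\kappa$ pairs of \emph{parallel} sides, one pair serving each gadget: in gadget $d$ the copy $R_{d,i}$ is placed just ``past'' $L_{d,i}$ so that the side $s_{1,d}(L_{d,i})$ and the parallel side $s_{2,d}(R_{d,i})$ bound a thin empty strip, and a small transversal shift then controls whether $L_{d,i}$ meets $R_{d,j}$, producing the interlocking pattern of Figure~\ref{lower6}. In a regular $(2m+1)$-gon there are no parallel sides, but there is an adequate substitute: label the sides $0,1,\dots,2m$ by their outer normal directions $2\pi j/(2m+1)$; then for each $d\in\{0,\dots,m-1\}$ the sides $d$ and $d+m$ are \emph{almost antiparallel} (their outer normals differ by $\pi-\pi/(2m+1)$), and the $m$ index pairs $\{d,d+m\}$ are pairwise disjoint, using up every side except side $2m$. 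We therefore run the construction of Theorem~\ref{thm:2kphom-lower} verbatim, except that in gadget $d$ the copy $R_{d,i}$ is placed just past $L_{d,i}$ so that side $d$ of $L_{d,i}$ and side $d+m$ of $R_{d,i}$ face each other across a thin empty \emph{wedge} (instead of a strip); a small transversal shift again realizes the half-graph $L_{d,i}\sim R_{d,j}\iff j<i$ together with $L_{d,i}\sim L_{d,j}$ and $R_{d,i}\sim R_{d,j}$ for all $i,j$. As before, $G_d$ has exactly one maximal clique $\{L_{d,i},\dots,L_{d,t},R_{d,1},\dots,R_{d,i-1}\}$ for each threshold $i$, hence $\Theta(n)$ of them, and since no scaling is used the construction even lies in $P_{translate}$.

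Finally, as in the closing paragraph of the proof of Theorem~\ref{thm:2kphom-lower}, the $m$ gadgets can be positioned so that every polygon of $G_{d_1}$ intersects every polygon of $G_{d_2}$ whenever $d_1\ne d_2$: the copies forming a single gadget occupy one small region, and the $m$ regions are placed so that copies from different regions pairwise overlap. Consequently a vertex set of $G$ is a maximal clique if and only if its trace on each $G_d$ is a maximal clique of $G_d$, so $G$ has $(n/(2m))^m=\Omega(n^{m-\log m})=\Omega(n^{\lfloor k/2\rfloor(1-\epsilon)})$ maximal cliques for every constant $\epsilon>0$, and $\Omega(n^{\lfloor k/2\rfloor})$ maximal cliques when $k$ (hence $m$) is a constant.

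I expect the main obstacle to be the middle step: verifying that the thin \emph{wedge} separating $L_{d,i}$ from $R_{d,i}$ behaves exactly like the thin strip of Theorem~\ref{thm:2kphom-lower} — namely, that as the transversal shift is varied, the pair $L_{d,i},R_{d,j}$ passes from disjoint to intersecting exactly once and in the order that yields the half-graph, while none of the ``intra-column'' edges $L_{d,i}\sim L_{d,j}$, $R_{d,i}\sim R_{d,j}$ is ever destroyed. This is a routine but slightly delicate exercise in planar geometry (the analogue of what is depicted in Figure~\ref{lower6}); the almost-antiparallelism, with angular defect only $\pi/(2m+1)$, is precisely what keeps the wedge thin enough for the argument to go through, and it is also the reason why only $\lfloor k/2\rfloor$ — rather than $k$ — usable disjoint pairs of sides exist.
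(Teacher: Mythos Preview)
Your proposal is correct and follows essentially the same architecture as the paper's proof: reduce the even case to Theorem~\ref{thm:2kphom-lower}, and for odd $k=2m+1$ build $m$ mutually fully adjacent half-graph gadgets on translates of $P$, one gadget per ``almost-opposite'' feature pair. The one substantive difference is in which feature of $R_{d,i}$ you put near the side $F_d$ of $L_{d,i}$. You use the almost-antiparallel \emph{side} $d+m$, so that the two boundaries meet along a thin wedge with angular defect $\pi/(2m+1)$; the paper instead uses the \emph{corner} $c_d$ exactly opposite $F_d$ (the vertex between sides $d+m$ and $d+m+1$). The paper's choice makes the local picture identical to the parallel-strip situation of Theorem~\ref{thm:2kphom-lower} (a point approaching a line), so the ``slightly delicate exercise'' you flag disappears entirely; your side--side version also works, but the corner--side version is the cleaner execution of the same idea.
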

\begin{proof}
The case where $k$ is even is covered by Theorem \ref{thm:2kphom-lower}. For the case where $k$ is odd, a construction very similar to the proof of Theorem \ref{thm:2kphom-lower} works.

For simplicity, set $q := \lfloor k/2 \rfloor$.
Let $F_1,F_2,..,F_q$ be $q$ consecutive sides of $P$. For each $d \in \{1,2,..,q\}$, let $c_d$ denote the  corner of $P$ which lies opposite the side $F_d$. For a copy $P'$ of $P$, by $f_d(P')$ (resp., $c_d(P')$) we shall denote the appropriate side (resp., corner) of $P'$.

For every $d\in \{1,2,..,q\}$ and $i\in \{1,\ldots, \frac{n}{q}\}$ we take a pair $L_{d,i}$, $R_{d,i}$ of copies of $P$ and place them in such a way that they are disjoint, but $F_d(L_{d,i})$ is very close to $c_d(R_{d,i})$ (see Figure \ref{lower9}, left).

Pairs $(L_{d,1},R_{d,1}), \ldots , (L_{d,n/q},R_{d,n/q})$ are placed is such a way that the following conditions are satisfied (again, refer to Figure \ref{lower9}, right):
\begin{itemize}
\item $L_{d,i}$ intersects $R_{d,j}$ for all $i,j\in \{2,\ldots n/q\}$ such that $j < i$,
\item $L_{d,i}$ intersects $L_{d,j}$ and $R_{d,i}$ intersects $R_{d,j}$ for all $i,j\in \{2,\ldots n/q\}$.
\end{itemize}

By the same reasoning as in the previous proof one can verify that the intersection graph for this configuration of polygons has $\Omega \left( (n/q)^q \right)$ maximal cliques.
\end{proof}

\begin{figure}[ht]
\begin{center}
\begin{tikzpicture}[scale=0.65]
\newcommand*\m{4}
\newcommand*\f{5}
\newcommand*\kk{9}

\foreach \x in {0,...,\m}
{
\node[draw,minimum size=2.5cm,regular polygon,regular polygon sides=\kk, color = blue] at (1.95+\x/\f,1.63+\x/\f/2) {};
\node[draw,minimum size=2.5cm,regular polygon,regular polygon sides=\kk, color = green] at (-1.77+\x/\f,2.5+\x/\f/2) {};
}
\end{tikzpicture}
\begin{tikzpicture}[scale=1.3*3/5]
\newcommand*\m{2}
\newcommand*\f{30}
\newcommand*\kk{9}

\foreach \x in {0,...,\m}
{
\node[draw,minimum size=3cm,regular polygon,regular polygon sides=\kk] at (0+\x/\f,0+\x/\f) {};
\node[draw,minimum size=3cm,regular polygon,regular polygon sides=\kk, color = red] at (0+\x/\f,3.8+\x/\f) {};

\node[draw,minimum size=3cm,regular polygon,regular polygon sides=\kk, color = blue] at (1.95+\x/\f,1.63+\x/\f) {};
\node[draw,minimum size=3cm,regular polygon,regular polygon sides=\kk, color = green] at (-1.77+\x/\f,2.5+\x/\f) {};

\node[draw,minimum size=3cm,regular polygon,regular polygon sides=\kk, color = orange] at (1.34-\x/\f,0.5+\x/\f) {};
\node[draw,minimum size=3cm,regular polygon,regular polygon sides=\kk, color = violet] at (-1.2-\x/\f,3.3+\x/\f) {};

\node[draw,minimum size=3cm,regular polygon,regular polygon sides=\kk, color = yellow] at (1.74-\x/\f,2.95+\x/\f) {};
\node[draw,minimum size=3cm,regular polygon,regular polygon sides=\kk, color = pink] at (-1.5-\x/\f,1.0+\x/\f) {};
}
\end{tikzpicture}
\end{center}
\caption{{\em Left:} A placement of polygons $L_{d,i}$ and $R_{d,i}$ in the construction from Theorem \ref{thm:k-reg-phom-lower}. {\em Right:} The representation of $G$.}
\label{lower9}
\end{figure}
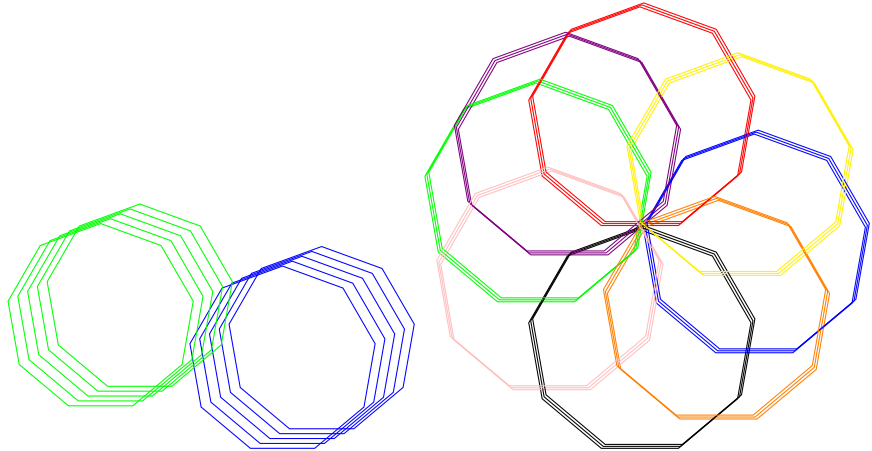

Again, observe that the construction above in fact works for $P_{translate}$ graphs.
We strongly believe that a similar construction can be conducted for any polygon $P$ with at least 4 directions of sides.

\begin{conjecture}
For every $k \geq 4$ and every convex $k$-gon $P$
there exists an infinite family of $P_{hom}$ graphs with $\Omega(n^{\lfloor k/2 \rfloor(1-\epsilon)})$ maximal cliques, for any $\epsilon > 0$.
\end{conjecture}

{
We also believe that there are polygons, for which this bound is asymptotically tight.

\begin{conjecture}
For every sufficiently large $k$ there exists a convex $k$-gon $P$
such that every $P_{hom}$ graph has at most $O(n^{ k/2 })$ maximal cliques.
\end{conjecture}
}

As a last result in this section, we give a general bound for $P_{hom}$ graphs, where $P$ is any convex polygon but a parallelogram. It is a simplified and generalized version of the construction for homothetic triangles, presented by Kaufmann {\em et al.} \cite{SODA}.

\begin{theorem}  \label{lowern3}
If $P$ is not a parallelogram then the maximum number of maximal cliques in an $n$-vertex graph in $P_{hom}$ is $\Omega(n^3)$.
\end{theorem}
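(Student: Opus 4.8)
The plan is to generalize the construction of Kaufmann \emph{et al.} for homothetic triangles. Recall that their triangle construction uses three families of $n/3$ nested parallel triangles, where the intersection pattern of the three families is arranged so that choosing one triangle from each family yields a maximal clique, giving $(n/3)^3 = \Omega(n^3)$ maximal cliques. The key geometric feature exploited there is that a triangle has three pairwise non-parallel sides, so the three "sweep directions" associated to the three sides are genuinely distinct, and a nested family in one direction can be made to interact with the nested families in the other two directions in a controlled way. For a general convex polygon $P$ that is not a parallelogram, the first step is to show that $P$ still possesses three sides $F_1, F_2, F_3$ whose supporting directions $\ell_1, \ell_2, \ell_3$ are pairwise non-parallel; this is exactly where the exclusion of parallelograms is used, since a parallelogram has only two directions of sides, whereas any convex polygon with at least three directions of sides (equivalently, any convex $p$-gon with $p \geq 3$ that is not a parallelogram) trivially has three such sides.

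The second step is to build, for each $d \in \{1,2,3\}$, a nested family of $n/3$ homothetic copies of $P$ whose mutual intersections are governed by the side $F_d$ — mimicking the $L_{d,i}, R_{d,i}$ idea from Theorems~\ref{thm:2kphom-lower} and~\ref{thm:k-reg-phom-lower}, but now using a single one-parameter family of nested translates (and possibly slight scalings, which are allowed in $P_{hom}$) rather than a left/right pair. Concretely, I would take copies $P_{d,1} \supsetneq P_{d,2} \supsetneq \cdots$ arranged along a thin "slab" bounded by lines parallel to $\ell_d$, so that within family $d$ every two copies intersect (they are nested), but a copy late in the sweep order can be "cut off" by a line parallel to $\ell_d$. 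The third step is the placement: position the three slabs near a common region so that (i) within each family all copies pairwise intersect, and (ii) a copy from family $d_1$ and a copy from family $d_2$ with $d_1 \neq d_2$ always intersect, regardless of their indices — this is possible because the three directions are distinct, so the three slabs can be made to cross pairwise in a small central region, exactly as in Figure~\ref{lower6} (right) and Figure~\ref{lower9} (right). Then any transversal selection of one copy from each family forms a clique, and a counting/maximality argument (each maximal clique is determined by its "deepest" member in each of the three sweep directions, and distinct transversals give distinct maximal cliques) yields $(n/3)^3 = \Omega(n^3)$ maximal cliques.

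The main obstacle I anticipate is verifying condition (ii) — that cross-family intersections can be forced for \emph{all} index pairs simultaneously while preserving the within-family nesting and the cut-off behavior — for an \emph{arbitrary} non-parallelogram $P$, not just for triangles or regular polygons. For triangles this is easy because the polygon is "small" relative to its sides; for a polygon with many sides, one must ensure that the extra sides of $P$ (those not among $F_1, F_2, F_3$) do not accidentally prevent some required intersection or create an unwanted separation. I expect this is handled by a scaling argument: blow up the three nested families to be much larger than the displacement vectors used to separate consecutive members, so that near the central region each copy behaves, locally, like a halfplane bounded by the relevant side $F_d$; then the three "halfplane-like" families interact just as three halfplanes in general position would, and all the required intersections follow. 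One should also double-check that the $\Omega(n^3)$ count is not spoiled by additional maximal cliques being \emph{smaller} than expected — but since the construction only needs a lower bound, it suffices to exhibit $(n/3)^3$ distinct cliques each of which is maximal (or is contained in a distinct maximal clique), so this is a routine finish once the geometric placement is established.
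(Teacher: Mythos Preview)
Your construction, as stated, yields the complete graph $K_n$ and hence has exactly one maximal clique. You require that (i) within each family all copies pairwise intersect (they are nested, $P_{d,1}\supsetneq P_{d,2}\supsetneq\cdots$) and (ii) any two copies from different families intersect. Conditions (i) and (ii) together say that \emph{every} pair of polygons in the arrangement intersects. The ``transversal'' count $(n/3)^3$ that you invoke is the count for the complete $3$-partite graph $K_{n/3,n/3,n/3}$, which requires the copies \emph{within} a family to be pairwise \emph{disjoint}, not nested; you have reversed this. And you cannot simply switch to disjoint families: $K_{n/3,n/3,n/3}$ contains $K_{3,3}$, which by Lemma~\ref{lemma:homarepdisk} and the planarity of triangle-free pseudodisk graphs is not in $P_{hom}$ for any $P$. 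So neither reading of your step two produces the desired graph.

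The paper's construction is genuinely different and more delicate. It uses \emph{four} families $h,r,t,v$ of $n/4$ slightly shifted copies each, arranged so that within every family all copies intersect, but the cross-family intersections are only \emph{partial} and controlled by the indices: e.g.\ $t_i\cap v_j\neq\emptyset$ iff $i\ge j$, $h_i\cap v_j\neq\emptyset$ iff $i\le j$, $h_i\cap t_j\neq\emptyset$ iff $i\le j$, $r_i\cap t_j\neq\emptyset$ iff $i\ge j$, while $r$ intersects everything in $h$ and $v$. This threshold pattern is what manufactures many maximal cliques: for every $1\le\alpha\le\beta\le\gamma\le n/4$ the set $\{h_1,\dots,h_\alpha,v_\alpha,\dots,v_\beta,t_\beta,\dots,t_\gamma,r_\gamma,\dots,r_{n/4}\}$ is a maximal clique, giving $\binom{n/4}{3}=\Omega(n^3)$ of them. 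The geometric placement uses one side $F_1$ of $P$ together with the vertex (or side) $P_1=D(F_1)$ opposite to it, and then the two sides $F_2,F_3$ incident to $P_1$; the four families are positioned so that the relevant side of one barely touches the opposite feature of another. The non-parallelogram hypothesis is what guarantees that $F_2$ and $F_3$ are not parallel, so these touching configurations can be realised independently. Your intuition that ``three non-parallel side directions'' is the crux is correct, but the combinatorial pattern you need on top of it is a chain of threshold intersections across four families, not a complete multipartite structure on three.
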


\begin{proof}
Let $F$ be a face of $P$ and $\ell$ be the line containing $F$. Consider the set of vertices of $P$, which are at the largest distance from $\ell$. If there is only one such vertex, denote it by $D(F)$. It there are two such vertices let $D(F)$ denote the face spanned by these two vertices.

Choose a side of $P$ and call it $F_1$ and let $P_1 = D(F_1)$.
Let $F_2,F_3$ be sides of $P$ adjacent to $P_1$ and let $P_2 = D(F_2)$ and $P_3 = D(F_3)$.

Let $h,r,t,v$ be four copies of $P$. By $F_i^h,F_i^r,F_i^t,F_i^v,P_i^h,P_i^r,P_i^t,P_i^v$ we denote the sides and corners in polygons $h,r,t,v$ corresponding to $F_i,P_i$ in polygon $P$ for $i\in \{1,2,3\}$, respectively (see Figure \ref{n3}).
We can adjust the sizes and positions of $h,r,t,v$ in such a way that:
\begin{enumerate}
\item $t$ and $h$ are touching and $F_1^t$ intersects $P_1^h$,
\item $t$ and $v$ are touching and $F_1^t$ intersects $P_1^v$,
\item $h$ and $v$ are touching and $F_2^h$ intersects $P_2^v$,
\item $r$ and $t$ are touching and $F_3^r$ intersects $P_3^t$,
\item $r$ and $h$ intersect.
\item $r$ and $v$ intersect.
\end{enumerate}

For every polygon $x\in \{h,r,t,v\}$ we make $\frac{n}{4}$ copies $x_1,..x_\frac{n}{4}$ and move them slightly with respect to the position of $h,r,t,v$ in such a way that:
\begin{enumerate}
\item $t_i$ and $v_j$ intersect iff $ i\ge j$,
\item $h_i$ and $v_j$ intersect iff $i\le j $,
\item $ h_j$ and $t_j$ intersect iff $ i\le j$,
\item $ r_i$ and $t_j$ intersect iff $i\ge j$,
\item $r_i$ and $v_j$ intersect  for all $i,j\in\{1,..,\frac{n}{4}\}$,
\item $ r_i$ and $h_j$ intersect for all $i,j\in\{1,..,\frac{n}{4}\}$.
\end{enumerate}
\sloppypar For any $\alpha,\beta,\gamma$ such that $1\le \alpha\le \beta\le \gamma\le \frac{n}{4}$ the set $\{h_1,.., h_\alpha,v_\alpha,.., v_\beta,t_\beta,.., t_\gamma,r_\gamma,.., r_\frac{n}{4}\}$ is a maximal clique in $G$. Hence there are at least ${\frac{n}{4} \choose 3}=\Omega(n^3)$ maximal cliques in total.

\begin{figure}[ht]
\begin{center}
\begin{tikzpicture}[scale=0.25]
\newcommand*\nn{4}
\newcommand*\sv{1}
\newcommand*\sh{3}
\newcommand*\st{5}
\newcommand*\sr{5}
\newcommand*\pv{--++(\sv * 3,\sv* 0)--++(\sv * 2,\sv * 2)--++(\sv * -3,\sv * 3)--++(\sv * -3,\sv * -2)--cycle}
\newcommand*\ph{--++(\sh * 3,\sh* 0)--++(\sh * 2,\sh * 2)--++(\sh * -3,\sh * 3)--++(\sh * -3,\sh * -2)--cycle}
\newcommand*\pt{--++(\st * 3,\st* 0)--++(\st * 2,\st * 2)--++(\st * -3,\st * 3)--++(\st * -3,\st * -2)--cycle}
\newcommand*\pr{--++(\sr * 3,\sr* 0)--++(\sr * 2,\sr * 2)--++(\sr * -3,\sr * 3)--++(\sr * -3,\sr * -2)--cycle}

\foreach \x in {0,...,\nn}
 {
    \draw[color=black, scale = 1] (-20 - \x/2,-15 + \x) \pt; 
    \draw[color=blue] (0 + \x /4,0 + \x /4) \pv;                   
    \draw[color=green] (-5 + \x/4,5 + \x/4) \ph; 
    \draw[color=red] (4.5 + \x /4,-6 - \x /4) \pr;      
 }

\end{tikzpicture}
\end{center}
\caption{Construction for the lower bound of $\Omega(n^3)$.}
\label{n3-full}
\end{figure}
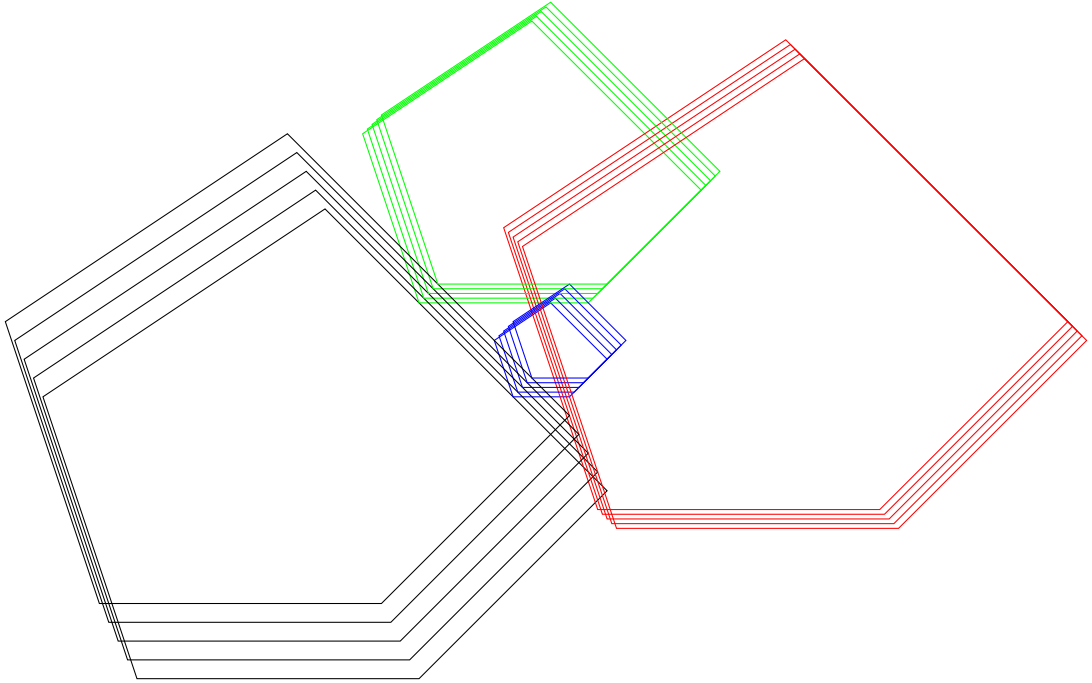

\begin{figure}[ht]
\begin{center}
\begin{tikzpicture}[scale=0.5]
\draw[line width = 1.5] (0,0)--(3,0);
\draw (3,0)--(5,2);
\draw[line width = 1.5] (5,2)--(2,5);
\draw (2,5)--(-1,3);
\draw[line width = 1.5] (-1,3)--(0,0);

\node at (4, 4) {\scriptsize $F_1$};
\node at (1.5, -0.5) {\scriptsize  $F_2$};
\node at (-1, 1.5) {\scriptsize  $F_3$};
\node at (-0.3,-0.3) {\scriptsize  $P_1$};
\node at (2,5.5) {\scriptsize  $P_2$};
\node at (5.5,2) {\scriptsize  $P_3$};

\end{tikzpicture}
\begin{tikzpicture}[scale=0.44]
\newcommand*\nn{5}
\newcommand*\sv{1}
\newcommand*\sh{3}
\newcommand*\st{5}
\newcommand*\sr{5}
\newcommand*\pv{--++(\sv * 3,\sv* 0)--++(\sv * 2,\sv * 2)--++(\sv * -3,\sv * 3)--++(\sv * -3,\sv * -2)--cycle}
\newcommand*\ph{--++(\sh * 1 * 0.4, \sh * -3 * 0.4)--++(\sh * 3,\sh* 0)--++(\sh * 1,\sh * 1)}
\newcommand*\pt{--++(\st * 0.2,\st * 0.2)--++(\st * -2.2,\st * 2.2)}
\newcommand*\pr{--++(0.9 * \sr * -1 ,0.9 * \sr * 3)--++(\sr * 0.1 * 3 + 5/4 - \x/4,\sr * 0.1 * 2+ 5/4 - \x/4)}

\foreach \x in {1,...,\nn}
 {
    \draw[color=black] (-19.5 + \st * 4.8-\x/2,-15.5 + 1.8 * \st + \x) \pt; 
    \draw[color=blue] (0 + \x /4,0 + \x /4) \pv;                   
    \draw[color=green] (0.4 * -1 * \sh -5 + \x/4,5 + 0.4 * 3 * \sh + \x/4) \ph; 
    \draw[color=red] (4.66 + \x /4 - 0.1 * \sr * 1,-6 - \x /4 + 0.1 * \sr * 3) \pr;      

    \node at (-20 + \st * 2.8-\x/2,-15.5 + 4.2 * \st + \x) {\scriptsize   $t_\x$};
    \node at (0.4 * -1 * \sh -5.3 + 1.3 * \x/4,5.5 + 0.4 * 3 * \sh + 1.3 * \x/4) {\scriptsize   $h_\x$};
    \node at (2.9, 11.3 - 3 + \x/2) {\scriptsize $r_\x$};
    \node at (5.3 + 1.2 * \x /4,1.6 + 1.2 * \x /4) {\scriptsize  $v_\x$};
 }
\end{tikzpicture}
\end{center}
\caption{Idea of the construction (for $n=20$).}
\label{n3}
\end{figure}
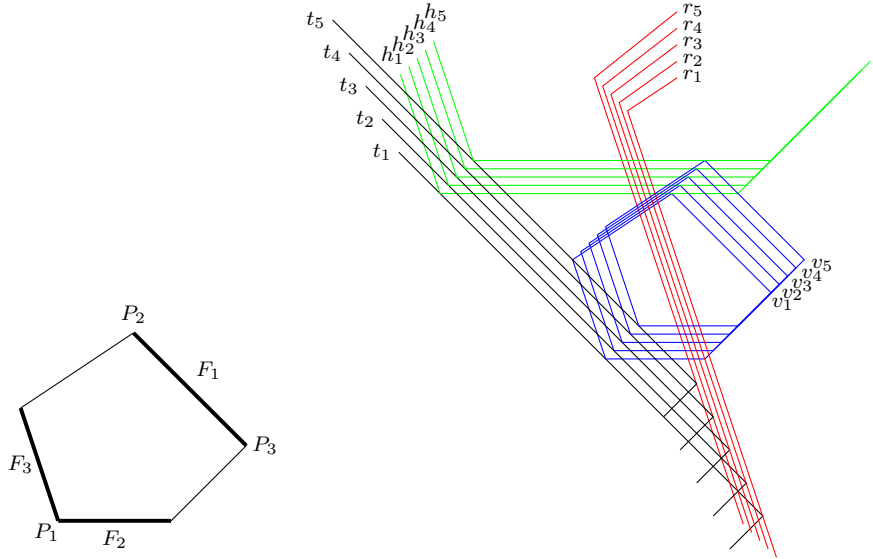
\end{proof}

\subsection{Towards higher dimensions}

In this section we generalize the concept of intersection graphs of convex polygons to arbitrary number of dimensions. The definitions we use here are straightforward generalizations of the definitions for the 2-dimensional case.

For a polytope $P$, let $\dim(P)$ denote its dimension.
Let $\L^d$ be the set of all $(d-1)$-dimensional hyperplanes in $\rr^d$, containing the origin point. For any $L \in \binom{\L^d}{k}$, by $\P^d(L)$ we denote the set of all polytopes in $\rr^d$, whose every facet (i.e., a $(d-1)$-dimensional face) is parallel to one of the hyperplanes in $L$. By $\P^d(k)$ we denote the set $\bigcup_{L \in \binom{\L^d}{k}} \P^d(L)$.

By $k\DIRCONVkd{L}$ we define the class of intersection graphs of polygons in $\P^d(L)$, while $k\DIRCONVd$ is defined as $\bigcup_{L \in \binom{\L^d}{k}} k\DIRCONVkd{L}$.

Now let us present the separation theorem for polytopes by Wright \cite{Wright}.

\begin{theorem}[\cite{Wright}]
Consider non-empty convex polytopes $P_1$ and $P_2$ in a Euclidean space and suppose that $P_1$ and $P_2$ can be properly separated\footnote{We say that a hyperplane $H$ {\em properly separates} convex sets $A$ and $B$ if at least one of those sets does not lie entirely within $H$.}. Then there exist parallel hyperplanes $H_1$ and $H_2$ properly separating $P_1$ and $P_2$, for which $\dim (H_1 \cap P_1) + \dim (H_2 \cap P_2) \geq \dim (P_1 \cup P_2) -1$.
\end{theorem}

From this theorem we can easily obtain the following corollary, generalizing Lemma \ref{lem:separate2}.

\begin{corollary} \label{cor:separate-d-dim}
Let $P_1$ and $P_2$ be disjoint convex $d$-dimensional polytopes.
Then they can be separated by a hyperplane $H$, which is parallel to some $d_1$-dimensional face of $P_1$ and to some $d_2$-dimensional face of $P_2$, such that $d_1 + d_2 = d-1$ (either $d_1$ or $d_2$ may be equal to 0).
\end{corollary}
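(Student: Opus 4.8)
The plan is to derive the statement directly from Wright's separation theorem stated just above. First I would verify that its hypotheses hold. Since $P_1$ and $P_2$ are polytopes they are compact, and two disjoint compact convex sets can even be strictly separated by a hyperplane; such a separation is in particular \emph{proper}, and moreover the condition ``at least one of the sets does not lie within $H$'' is automatic here because both $P_1$ and $P_2$ are $d$-dimensional and hence cannot be contained in any hyperplane. Also, each of $P_1,P_2$ is $d$-dimensional, so $\dim(P_1\cup P_2)=d$.

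Next I would apply Wright's theorem to obtain parallel hyperplanes $H_1$ and $H_2$ properly separating $P_1$ and $P_2$ with $\dim(H_1\cap P_1)+\dim(H_2\cap P_2)\ge d-1$. Note that $H_1\cap P_1\ne\emptyset$ (otherwise the inequality would force $\dim(H_2\cap P_2)\ge d$, impossible since $H_2\cap P_2$ lies in a hyperplane), and since $H_1$ separates $P_1$ from $P_2$ it is a supporting hyperplane of $P_1$; hence $G_1:=H_1\cap P_1$ is a nonempty face of $P_1$ with $0\le d_1':=\dim G_1\le d-1$. Symmetrically $G_2:=H_2\cap P_2$ is a face of $P_2$ with $0\le d_2':=\dim G_2\le d-1$, and $d_1'+d_2'\ge d-1$.

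It remains to pass from this inequality to the exact equality. Because $d_1'+d_2'\ge d-1$ while $d_1',d_2'\le d-1$, I can choose integers $d_1\le d_1'$ and $d_2\le d_2'$ with $d_1+d_2=d-1$ (e.g.\ $d_1=\min(d_1',d-1)$, $d_2=d-1-d_1$, which satisfies $0\le d_2\le d_2'$). Since a polytope of dimension $d_i'$ has faces of every dimension $0,1,\dots,d_i'$, the face $G_i$ contains a face $F_i$ of $P_i$ with $\dim F_i=d_i$. Now set $H:=H_1$. Then $H$ separates $P_1$ and $P_2$; it contains $F_1\subseteq G_1\subseteq H_1=H$, so it is parallel to the $d_1$-dimensional face $F_1$ of $P_1$; and because $H=H_1$ is parallel to $H_2\supseteq G_2\supseteq F_2$, it is also parallel to the $d_2$-dimensional face $F_2$ of $P_2$. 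As $d_1+d_2=d-1$, this is the claimed conclusion, and for $d=2$ it recovers Lemma~\ref{lem:separate2}.

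The argument is short, so there is no real obstacle; the only points I would take care to spell out are that Wright's hyperplanes may be taken to actually support $P_1$ and $P_2$ (so $H_i\cap P_i$ are genuine, nonempty faces), and that being parallel to a face entails being parallel to all of its sub-faces, so that the dimension-shaving step is legitimate.
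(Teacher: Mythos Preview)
Your proposal is correct and is precisely the intended derivation: the paper gives no explicit proof but simply states that the corollary follows easily from Wright's theorem, and your argument spells this out carefully. The only minor point worth noting is that the paper does not articulate the dimension-shaving step (passing from $d_1'+d_2'\ge d-1$ to subfaces with $d_1+d_2=d-1$), so in that sense your write-up is more complete than what the paper provides.
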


We shall bound the maximum number of maximal cliques in a similar way as we did in the proof of Theorem \ref{thm:max-dirconv}. 

{
\begin{theorem}
Every $n$-vertex graph in $k\DIRCONVd$ has at most $n^{d \cdot k^{d+1}}$ maximal cliques.
\end{theorem}

\begin{proof}
Let $G$ be an $n$-vertex graph in $k\DIRCONVd$.
Fix some representation of $G$ with $d$-dimensional polytopes, whose every facet is parallel to one hyperplane from  $L = \{\ell_1,..,\ell_k\} \in \binom{\L^d}{k}$.

Let $P_1, P_2 \in \P^d(L)$. Let $\mathcal{H}$ be the set of all $(d-1)$-dimensional hyperplanes $H$ containing the origin point, such that $H$ is parallel to some $i$-dimensional face of $P_1$ and to some $(d-1-i)$-dimensional face of $P_2$ (for $i \in \{0,1,..,d-1-i\}$). Define $h:=|\mathcal{H}|$ and let $\mathcal{H} = \{H_1,H_2,..,H_h\}$.
Notice that each $i$-dimensional face is defined as an intersection of some $d-i$ facets. Thus, we have 
$$
h \leq \sum_{i=0}^{d-1} {k \choose d-i} {k \choose i+1} \leq \sum_{i=0}^{d-1} k^{d-i} k^{i+1} = d \cdot k^{d+1}.
$$
For every $j\in \{1,\ldots,h\}$, let $w_j$ be an arbitrary normal vector of $H_j$. Now we can proceed in the same way as in the proof of Theorem \ref{thm:max-dirconv}, considering sweeping $(d-1)$-dimensional hyperplanes in each direction $w_j$ (for $j \in \{1,2,..,h\}$), instead of sweeping lines.

Thus, it follows that number of maximal cliques in $G$ is at most $n^h \leq n^{d \cdot k^{d+1}}$.
\end{proof}
}

\subsection{Parametrized complexity of the {\sc Clique} problem in $P_{hom}$ graphs} \label{sec:algorithms}
In this paper we have shown that the number of maximal cliques in any $k\DIRCONV$ graph (and therefore in any $P_{hom}$ graph for $P \in \P(k)$) is at most $n^{k}$. 
Tsukiyama {\em et al.} \cite{maxcliques} presented an algorithm enumerating all the maximal cliques in an $n$-vertex graph in time $O(n^3 \cdot C)$, where $C$ is the number of maximal cliques. Thus the {\sc Clique} problem  can be solved in time $O(n^{k+3})$ for any $G \in k\DIRCONV$ (even if the geometric representation is not known), and therefore is in XP when parameterized by $k$. 
One can observe that the proof of Theorem \ref{thm:max-dirconv} yields a slightly better $O(k \cdot n^{k+2})$ algorithm for this problem. However, it requires that the geometric representation of the input graph is given.

It is interesting to know if the problem is in FPT, or more generally, to answer the following question.

\begin{problem}
What is the parametrized complexity of the {\sc Clique} problem for $P_{hom}$ graphs (parametrized by the number of directions of sides of $P$)?
\end{problem}
\vskip 5 pt

\section{Recognition}

In this section we prove some results concerning the hardness of recognition of $k\DIRCONV$ and $P_{hom}$ graphs. Before we start showing our results, let us present relevant known facts that we use later. We are using three main tools.

Our results are using reduction of Hlin\v en\'y and Kratochv\'\i l \cite{HK}. We show that this reduction works also for classes we are interested in. It reduces the {\sc E3-Nae-Sat}(4) problem to disk- and pseudodisk-graph recognition.

The instance of {\sc Nae-Sat} is a boolean formula in conjunctive
normal form and we ask whether there exists a (satisfying) assignment
such that in no clause all the literals are evaluated true.
The version {\sc E3-Nae-Sat}(4) is a restriction of {\sc Nae-Sat}
to formulae with each clause consisting of {\bf exactly} three literals
and each variable occurs at most four times. For a given formula, the
reduction starts by taking an incidence graph of this formula. Incidence
graph is a bipartite graph whose one part is formed by variables, the other
part is fromed by clauses and an edge indicates that a particular variable
belongs to a given clause. 

The reduction replaces individual parts of the incidence graph by
individual gadgets: Vertices corresponding to variables are replaced
by variable gadgets, vertices corresponding to clauses are replaced
by clause gadgets, edges are replaced by incidence gadgets. The
reduction proceeds w.r.t. particular planar embedding of the incidence
graph, each mutual intersection of edges (in that embedding) has to be
reflected by "cross-over" gadget. The use of these cross-over gadgets is
rather tricky as will be seen later from their formal description.
Individual gadgets are designed in such
a way that the resulting graph has disk- or pseudo-disk representation
iff the original formula has the appropriate (variable-)assignment.

The second tool is very similar to the first one and can be found in \cite{KratDIR}. This time, Kratochv\'\i l used a bit different version of SAT: Planar 3-CON-3-SAT(4). Here we are asking about satisfying assignment of a formula (in conjunctive form). Individual clauses consist of 3 literals, each variable occurs at most four times and yet the incidence graph of this formula is planar and 3-connected. This problem is also shown to be NP-hard in \cite{KratDIR}. Again, gadget-replacements are applied and the resulting graph can be represented by straight line segments using (at most) 3 directions (3-DIR) or it even does not allow a string representation (i.e., by arc connected curves in a plane). Due to this fact, any class containing 3-DIR and simultaneously being contained in class of string graphs has to be NP-hard for recognition.

The third tool we will use was proposed by Kratochv\'\i l and Matou\v sek in \cite{KratMat}. Using a clever linear programming formulation they showed that for fixed slopes the recognition of segment graphs is in NP. This tool witnesses the existence of representation of a given graph by straight-line segments in the following way: We are showing membership of the problem in NP, therefore we need to design a certificate of existence of that representation. As a certificate one takes combinatorial description of the arrangement of pseudosegments extended to pseudolines saying the order of intersections and assigning individual slopes. To witness its stretchability from the certificate, a linear program is created. It enforces individual crossings to appear in prescribed order (w.r.t. left-right orientation). The fact that a line $l: y=a_lx+b_l$ (whose parameters $a_l$ and $b_l$ we determine from the certificate) intersects line $m: y=a_mx+b_m$ before line $n: a_nx+b_n$ (w.r.t. orientation from left to right) can be expressed by the inequality 
${b_m-b_l \over a_l-a_m}<{b_n-b_l \over a_l-a_n}.$ This inequality stems from the fact that the $x$-coordinate of intersection-point of $a_l$ and $a_m$ is $x_1={b_m - b_l \over a_l - a_m}$, for $a_l$ and $a_n$ we obtain similarly $x_2$ and left-right precedence corresponds to the fact that $x_1<x_2$. As the values $b_l, b_m$ and $b_n$ are constants while $a_l, a_m$ and $a_n$ are variables (witnessing existence of "starting-points" of individual lines), these expressions give a linear program that can be solved in polynomial time, e.g., by the ellipsoid method.

Equipped by these three useful tools, we proceed to the results:

\begin{theorem}\label{thm:kdirk}
For every fixed $k \geq 2$, it is NP-complete to recognize
\begin{enumerate}
\item $k\DIRCONVk{L}$ graphs for any $L \in \binom{\L}{k}$,
\item $k\DIRCONV$ graphs.
\end{enumerate}
\end{theorem}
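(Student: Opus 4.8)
The plan is to prove that both problems lie in NP and are NP-hard. For membership in NP, I would show that every $k\DIRCONVk{L}$ graph, and every $k\DIRCONV$ graph, admits an intersection representation of polynomial bit-complexity, so that it suffices to guess such a representation and verify it in polynomial time. When $L$ is fixed, each polygon of $\P(L)$ is an intersection of $O(k)$ half-planes whose bounding lines have one of the $k$ prescribed slopes, hence is determined by $O(k)$ real offsets; once one guesses a ``combinatorial type'' of the desired representation --- a common point for every edge together with, by Lemma~\ref{lem:separate2}, a separating line of one of the $k$ slopes for every non-edge --- the realizability conditions are linear in these offsets, so they form a linear program which, if feasible, has a rational solution of polynomial size. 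For $k\DIRCONV$ one additionally guesses the $k$ directions; a perturbation argument (pushing any representation to a ``strict'' one without changing its intersection graph) shows the directions may be taken rational, after which the previous argument applies. This can also be derived from the framework of van Leeuwen and van Leeuwen.

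For the base case $k=2$, hardness comes essentially for free. Given any pair $L=\{\ell_1,\ell_2\}$ of distinct lines through the origin, the invertible linear map taking the direction vectors of $\ell_1$ and $\ell_2$ to the two coordinate vectors carries every polygon of $\P(L)$ to an axis-parallel rectangle and preserves all intersections; hence $2\DIRCONVk{L}$ is, as a graph class, precisely the class of intersection graphs of isothetic rectangles --- the graphs of boxicity at most $2$ --- for every $L$, and so is $2\DIRCONV$. Both parts for $k=2$ therefore follow from Kratochv\'il's theorem that recognizing graphs of boxicity at most $2$ is NP-complete.

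For $k\ge 3$ this identity is no longer available, so I would give a genuine reduction, from recognition of graphs of boxicity at most $2$ (equivalently, from $2\DIRCONVk{L_0}$ recognition with $|L_0|=2$). Given an instance $G$, output $G'=G\cup\Phi$, where $\Phi$ is a rigid ``frame'' gadget attached to $G$ along a prescribed interface and engineered so that in every $k\DIRCONVk{L}$ representation of $G'$ the representatives of $\Phi$ confine the representatives of $V(G)$ to a bounded region in which a polygon representing a vertex of $G$ may use only two of the $k$ directions without forcing an unwanted intersection with $\Phi$; then $G'\in k\DIRCONVk{L}$ if and only if $G$ has boxicity at most $2$. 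A standard alternative is a direct reduction from a planar SAT variant, with variable, wire and clause gadgets built from $\P(L)$-polygons and augmented by ``blocking'' polygons along the surplus directions. In either route, constructing the gadget and proving that the adversary cannot exploit the extra $k-2$ directions to defeat its intended rigidity is the main obstacle.

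Finally, part (2). For $k\le 3$ it is immediate from part (1), since the projective linear group acts $3$-transitively on directions, so all $L$ with $|L|\le 3$ give affinely equivalent classes and $k\DIRCONV=k\DIRCONVk{L}$ for any $L$. For $k\ge 4$, where the cross-ratios of the directions of $L$ are affine invariants and $k\DIRCONV$ is a proper union of the classes $k\DIRCONVk{L}$, I would attach to $G'$ a further ``direction-forcing'' sub-gadget --- a configuration every $k\DIRCONV$ representation of which pins the $k$ defining directions down up to an affine transformation --- so that $G'\in k\DIRCONV$ if and only if $G'\in k\DIRCONVk{L}$; exhibiting such a rigid sub-gadget is, once again, the crux of this last step.
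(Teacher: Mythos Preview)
Your NP-membership argument and the $k=2$ hardness argument are essentially what the paper does.

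For $k\ge 3$, however, you are working much harder than necessary and leave the core steps unfinished. You propose building a ``frame gadget'' that forces only two of the $k$ directions to be used, and for part~(2) an additional ``direction-forcing'' gadget; you explicitly flag both constructions as the main obstacles and do not carry them out. The paper avoids all of this via a \emph{sandwich} argument. For any $L$ with $|L|=k\ge 3$ one has
\[
3\text{-DIR}\;\subseteq\;k\DIRCONVk{L}\;\subseteq\;\text{CONV},
\]
the first inclusion because any three directions are affinely equivalent, so a $3$-DIR representation can be taken to use three directions from $L$ and then fattened to parallelograms in $\P(L)$ as in Theorem~\ref{kdir-kdirconv}. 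Kratochv\'{\i}l's reduction in \cite{KratDIR} produces, from a formula $\varphi$, a single graph $G_\varphi$ such that $G_\varphi\in 3$-DIR when $\varphi$ is satisfiable and $G_\varphi\notin\text{CONV}$ when $\varphi$ is unsatisfiable. Hence the very same reduction is a hardness reduction for every class sandwiched between $3$-DIR and CONV, in particular for $k\DIRCONVk{L}$ and for $k\DIRCONV$, with no gadget engineering required. Your proposed route might be salvageable, but as written it is a plan with its hardest parts deferred; the sandwich observation turns the $k\ge 3$ case into a one-line citation.
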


\begin{proof}
As $2\DIRCONV$ graphs are exactly graphs of boxicity at most 2, they are NP-complete to
recognize \cite{KratDIR}.

For $k>2$, the class of $k\DIRCONV$ graphs contains the class of 3-DIR graphs and at the same time is contained in CONV. 
Thus, to prove NP-hardness we may apply the reduction from \cite{KratDIR} mentioned as our second (third-party) tool.
Since for $k=3$ all triples of directions are equivalent under an affine transformation of the plane, this shows that for $k>2$, recognition of $k\DIRCONV$, and also of $k\DIRCONVk{L}$ or of any $k$-tuple of directions, is NP-hard.

NP-membership follows directly from \cite{KratMat} as the appropriate polygon-sides get represented by underlying lines and the problem reduces to the question whether a given pseudoline arrangement where lines have prescribed slopes, is stretchable. As the same idea is used yet in proof of the next theorem, to avoid rewriting already published facts, we mention in the next proof which its part suffices as the proof of this claim.
\end{proof}

Concerning the class of $P_{hom}$ graphs, our aim is to obtain the following result announced in \cite{KratPer}.
\begin{theorem}\label{thm:phom}
For every convex polygon $P$, the recognition problem of $P_{hom}$ graphs is NP-hard.
\label{thm:phomrecoghard}
\end{theorem}

Before proving the theorem, we observe the following.
The convex polygons are required either to properly intersect or to not intersect at all, i.e.,
they are not allowed to only touch each other at border points.
Therefore, we may consider
that any representation is perturbation resistant, i.e., we
may slightly move any polygon in any direction and obtain a
topologically equivalent configuration. 
Thus we may consider only
representations satisfying the assumptions of
Lemma~\ref{lemma:homarepdisk}.

\begin{proof}[Proof of Theorem \ref{thm:phom}]
As a first step we refer to \cite{SODA} which proves the NP-hardness for
homothetic triangles.

For polygons with more corners we use the construction introduced in \cite{HK},
i.e., first tool in our toolbox introduced at the beginning of the section.
We apply the same reduction and thus we just perform more sensitive analysis
consisting of two facts: First is that each $P_{hom}$ graph is
a pseudodisk graph. Second fact shows how to represent the appropriate
graphs as homothets of a given shape when pseudodisk representation exists
(i.e., the given formula was satisfiable).

By Lemma \ref{lemma:homarepdisk} we have that the graphs representable by the homothetic
polygons form a subset of graphs representable by pseudo-disks. Thus
it suffices to show that the graph obtained from any satisfiable
formula can be represented by homothetic polygons in a plane. 
For this, we need to introduce some terminology and denotations.

For a convex polygon $P$ we choose an orthogonal basis $b_1, b_2$
such that all sides of the polygon $P$ are not parallel to
$b_1$ or to $b_2$. Given such a basis $b_1, b_2$, we consider the
smallest axis-aligned rectangle containing $P$ (its bounding box)
and denote it by $BB(P)$. One can easily see that we may choose
even such a basis that $P$ touches $BB(P)$ inside $BB(P)$'s edges
rather than at corners.

For an arrangement of homothetic convex polygons we may pick up such
a basis $b_1, b_2$. As we will not be interested in the basis itself but
only in the bounding boxes, we will not 
require
that $BB(P)$ must
be taken with respect to a certain basis. 
The basis will be fixed in a way
to secure the condition that the corners of $BB(P)$ are not elements
of $P$.

Now we recall the reduction. As it is described in detail 
in \cite{HK} and we mentioned its idea at the beginning of this section,
now we focus on its main points:

We reduce the {\sc E3-Nae-Sat}(4) problem to representability
by homothetic polygons of any shape. The graph whose
$P_{hom}$-representability we will be questioning, consists of gadgets
for clauses, gadgets for variables, and connections between them.
This graph we obtain from the incidence-graph of a given formula
by replacing each clause-vertex by a clause gadget, variable-vertex
by a variable gadget and an edge between them by an occurence gadget.
Let us note that the incidence graph for a given formula is a bipartite
graph having a vertex for each variable and also for each clause. An edge
corresponds to the fact that the variable occurs in this clause (either
positively or negated).

The gadget for a variable
is $C_8$ (i.e., cycle with vertices $c_1,\ldots,c_8$), which can clearly
be represented
by homothetic polygons in a plane. Each occurrence of the variable is
represented by two consecutive vertices ($c_{2k-1}, c_{2k}$). If the
variable's occurrence is negated, we swap the labels of $c_{2k-1}$
and $c_{2k}$. The truth assignment in the representation
is determined by the orientation of polygons $P_{c_1},\ldots, P_{c_8}$
(they may go either clockwisely or counter-clockwisely). The
connections are represented by a ladder (see Figure~\ref{ladder})
and crucial for the construction is the fact (proved in \cite{HK})
that the ladder cannot distort. By distorting we mean swapping
``from the left to the right'', i.e., the path that started on the
left side of the other, will always be represented to the left of
the other. To be more formal, if we take stripe of plane from variable gadget to clause gadget delimited by left-most and right-most curves in the representation of ladder, the left end of this stripe will consist (only) of curves representing path that started on the left while the right end of stripe will consist (only) of curves representing path that started on the right. This follows from the fact that we are working with pseudodisks. In this way, the left-right orientation of paths in the ladder is well-defined.

\begin{figure}[h]
\begin{center}
\includegraphics{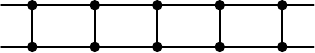}
\end{center}
\caption{The ladder}
\label{ladder}
\end{figure}

The ladders (occurence gadgets)
representing the first and the second occurrence of the particular
variable are connected by a ``cross-over''-gadget. The same applies to the third and
the fourth occurrence. We want to obtain the ladders representing
particular occurrences of the variable with respect to clockwise
orientation for positive and negative variable to appear around
the variable-gadget always in the same order (i.e., ladder 1,
ladder 2, ladder 3, and ladder 4). This is secured by implementation of the cross-over. 
If the variable is assigned a value ``true'', we just make the respective
ladders touch; if the variable is assigned a value ``false'', we cross them.
A crucial fact (proved in \cite{HK}) is that twisting a ladder does not occur 
(as there is always ``the left'' row and ``the right''
row). The cross-over gadget is depicted in Figure \ref{crosscapfig}.

To design the appropriate length of these ladders we need a hardly mentioned
trick (used in all constructions of this family): We pick one particular
drawing of the incidence graph and adjust ladder-lengths for this drawing.
This ensures that at least for one drawing we avoid the problem that some
ladder appears to be too short. To give insight how long paths do we need,
we may draw the graph onto a square-grid with vertices and edge-crossings
in the centers of grid-squares. When replacing vertices by gadgets,
we ensure that each vertex or cross-over is represented by a gadget
of size 1/10 (of the grid-square) in each direction. Then, we have 9/20
in each direction to the boundary of grid-square. We design lengths
of ladders so that even all ladders fit in this gap. I.e. (as we behave
with planarized drawing, we have linearly many edges w.r.t. vertices
and numbers of crossings), i.e., if for each grid-square the ladder
behave with we make this ladder of $500*20/9*n^2$ pairs (which causes
still polynomial blow-up), obviously individual polygons can be so small
that these ladders fit even if all of them should pass through the same
grid-squares.
\begin{figure}[h]
\begin{center}
\includegraphics{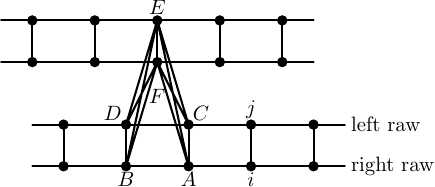}
\end{center}
\caption{This figure shows one cross-over on two ladders. One row
in each ladder is called the left row, the other the right row.
Vertex labels correspond to the labels on the next figure.}
\label{crosscapfig}
\end{figure}
The two pictures in Figure \ref{crosstouch} illustrate how
two ladders touch (resp. cross). In the following paragraphs we refer to 
denotations from that figure.

\begin{figure}
\begin{center}
\includegraphics[scale=0.9]{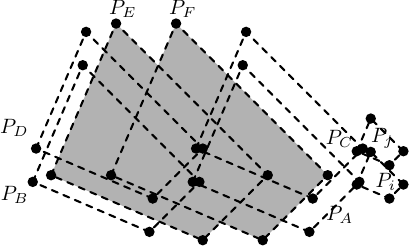} 
\hfill \includegraphics[scale=0.9]{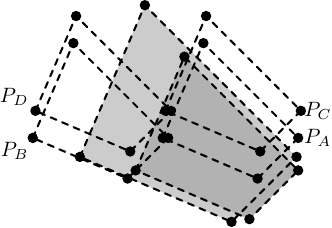}
\end{center}
\caption{The  picture on the left shows how to cross representations
of two ladders in cross-over. The one on the right illustrates the case where they only touch each other.}
\label{crosstouch}
\end{figure}

     More precisely, if we want to
cross two ladders, we represent $P_A$ and $P_B$ by polygons
of the same size crossing only slightly. We create $P_C$ and $P_D$
to be of the same size. So we obtain a quadruple of polygons of
width $(2-\delta)\cdot \hbox{width}(P_A)$ and height
$(1+\varepsilon)\cdot \hbox{height}(P_A)$. We choose a factor
$\mu$ and create polygons $P_E$ and $P_F$ scaled to $P_A$
by factor $(1+\mu)$ and make them cross slightly more
than $P_A$ and $P_B$ do. In the formula, $\delta, \varepsilon$
and $\mu$ are small positive constants depending on prescribed shapes
(and they remain so even in other cases).
Now we use the following fact about
bounding boxes: 
Except for small intervals around the
four points where the polygon touches the boundary of
its bounding box, there is a small stripe inside the
boundary of the bounding box which is disjoint from the polygon.

If we want the ladders to touch only, $P_A,\ldots,P_D$ get
represented in the same manner. We represent $P_E$
by a polygon obtained as follows: 
We take $P_E$ as a copy
of $P_A$ (placed over $P_A$).
Then we shift it to the left (to avoid intersection with $P_i$)
and to the bottom. Then we scale it slightly to intersect
$P_D$ (it can be done by scaling by factor $(1+2\cdot\varepsilon)$).
Now we have a proper representation of the whole gadget except
$P_E$. To represent $P_E$, we take a copy of $P_D$ (placed over
$P_D$). Then we shift it slightly to the right and start
scaling it up to force the right place of intersections with
$P_D$. We certainly may stop scaling before the critical factor
$(2-\delta)$ is reached, as for the factor $(2-\delta)$ we
could place the polygon to intersect $P_A,\ldots,P_E$.
Moreover,
the rightmost corner could be placed below the bottom
intersection of $P_A$ with $P_i$.

Whenever two ladders leading from variables to clauses cross,
we represent this crossing by the cross-over, too.

     After cross-overs the ladder enters {\bf the
clause gadget}. This is represented by a surrounding circle
and a 
structure inside it (see Figure \ref{klauze}).

\begin{figure}
\begin{center}

\includegraphics{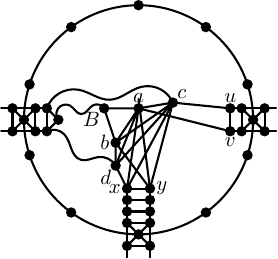}
\end{center}
\caption{The clause gadget, wavy ``edges'' depict arbitrarily long paths.}
\label{klauze}
\end{figure}

     We can easily see that certain problems with the gadget can occur 
only when representing vertices $abcdB$ and their neighbors.
How it is done is illustrated in Figure \ref{repace}. 
We analyze the
existing possibilities depending on the positions of polygons $P_a, P_c$ and $P_x, P_y$.

\paragraph{Case 1.: False/True.} We build this representation
similar to the cross-over described earlier. 
Note that here $P_a$ has the $y$-coordinate of
its left touch-point between $y$-coordinates of left
touch-points of $P_d$ and $P_b$ (while the $x$-coordinates are the same).
For the top and bottom touch-point the situation is similar. 
The polygon $P_c$ is
created as a copy of $P_a$ slightly scaled down and shifted
to the top and to the left, so that it is  covered  from the bottom by
$P_a$ and has  the left and top touch-point with the same
properties as $P_c$.

\paragraph{Case 2.: True/True.} This case is just a special case of the cross-over.

\paragraph{Case 3. True/False.} This case works like in Case 1, but instead of $P_a$
we start with $P_c$.
Then we create $P_a$ as a copy of $P_c$, scale $P_a$ down slightly,
move it slightly upwards, and then more slightly to the left to obtain
the left touch-point of $P_a$'s.
The $x$-coordinate is still larger than
the one of $P_c$, but the top touch-point has the same
$y$-coordinate and its $x$-coordinate is slightly lower.

\paragraph{Case 4.: False/False.} We proceed like in Case 1. After we add
$P_c$, we obtain $P_a$ as a copy of $P_a$ slightly shifted to
the right.
Then we scale it up to obtain an $x$-coordinate of the
left touch-point of $P_a$ that is less than or equal to that of $P_c$, and
a $y$-coordinate of the left touch-point of $P_a$ that is between those
of $P_b$ and $P_d$. 
Then we scale up $P_b$ so that it covers
the intersection point of $P_a$ and $P_c$, and we are
done.

\begin{figure} 
\includegraphics[scale=0.7]{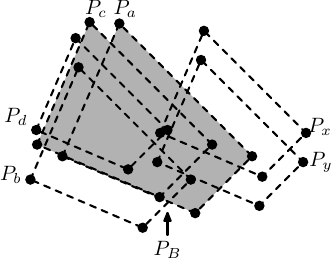}
\includegraphics[scale=0.7]{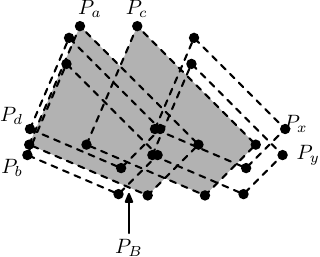}
\includegraphics[scale=0.7]{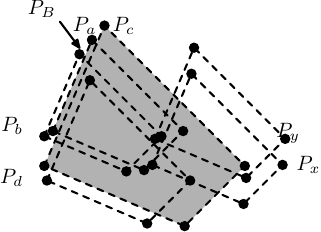}
\includegraphics[scale=0.7]{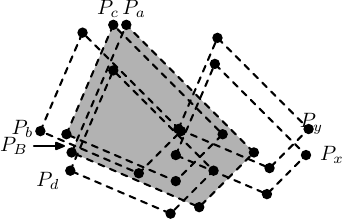}
\caption{Clause gadget and its representation with respect to truth-assignment
for individual variables.}
\label{repace}
\end{figure}

{
The NP-membership of the recognition of $P_{hom}$ graphs has been proved
in \cite{vLvL}. In what follows, we briefly review an argument based on an approach
of \cite{KratMat}, i.e., third 3rd-party tool which, as mentioned earlier, implies the NP-membership stated
in Theorem~\ref{thm:kdirk}.

We have to establish a polynomially-large certificate showing that the
desired representation exists.
For this we use a combinatorial description of the arrangement,
i.e., we guess a description specifying in what order individual sides
of individual polygons intersect. We also need certain information about particular
corners of individual polygons. For individual sides of polygons
we also need to know their directions (for this, 
it is sufficient to keep the index of the direction, i.e., a number
in the set $\{1,..,k\}$). To make the situation formally simpler, instead of segments we consider a description of
the whole underlying lines. Note that a corner of a polygon and the intersection of
boundaries appear here as intersection of two lines.

Now we have to verify the realizability of such an arrangement.
For this, we construct a linear program consisting of inequalities describing the ordering of the intersections
along each side of each polygon. For a line $p$ described by the equation $y=a_px+b_p$, the intersection with line
$q$ precedes the intersection with line $r$ (``from the left to the right'') if
${b_q-b_p \over a_p-a_q}<{b_r-b_p \over a_p-a_r}.$
In the case of prescribed directions ($a_p, a_q, a_r$), we have a linear
program whose variables are the $b$ coefficients. 

This linear program can be solved in polynomial time, which implies the NP-membership for $k\DIRCONVk{L}$. If the directions are not fixed, we use an argument from \cite{KratMat}, that the directions obtained as solutions of the considered linear program are of polynomial size and thus they may also be part of a polynomial certificate. This implies the NP-membership for $k\DIRCONV$ graphs; for $P_{hom}$ graphs we extend the argument as follows:

So far, we have extended the existing linear program by the equations controlling
the ratios of side-lengths for individual polygons. For intersections
of a line $p$
with neighboring sides $q$ and $r$ of a polygon $A$, we add the following
equation:
${b_r-b_p\over a_p-a_r} - {b_q-b_p\over a_p-a_q}=k_p\cdot s_A$.
Here the variable $s_A$ represents the size of a polygon $A$, while $k_p$ refers
to the horizontal length of side lying on a line in the direction of $p$ in a ``unit homothetic polygon"
(i.e., in a polygon with ``unit horizontal length," whose lowest
$x$-coordinate differs from the hightest by one). See Figure \ref{polycert} for
illustration. Note that again
the denominators are constants, so once again we obtain a linear program.
If the shape of polygon $P$ is fixed, we are done. 
Otherwise, if the polygon $P$ is not given, we can regard the directions of its sides as variables and use the same trick as in the proof of Theorem~\ref{thm:kdirk}.}
\end{proof}

Thus we obtain the following strengthening of the results from \cite{vLvLM} showing the existence of
polynomial certificate for the recognition problem when the underlying
polygons are rational.  

\begin{figure}
\begin{center}
\includegraphics{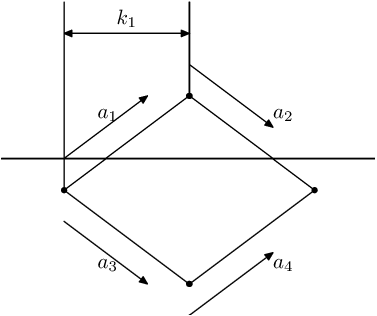}
\end{center}
\caption{Illustration to the verification of the polynomial certificate
of a $P_{hom}$ graph. The parameter $k_1$ is assigned to the whole polygon
and it is designed to verify that all the sides of the polygon were obtained
by scaling the original polygon by the same factor.}
\label{polycert}
\end{figure}

\begin{theorem}\label{thm:NPPhom}
For every fixed $k$, the problem of deciding whether there exists a convex $k$-gon $P$ such that an input graph is in $P_{hom}$, is NP-complete.
\end{theorem}
\noindent This
partially solves Problem 6.3 posed in \cite{vLvLM}, where one asks whether the
recognition of intersection graphs of homothetic convex polygons is
in NP for all convex polygons.

\section*{Acknowledgements} We thank Michael Kaufmann for valuable discussions on the topic.

Parts of the results have been presented at the following conferences: ISAAC 2012 \cite{isaac} and IWCIA 2014 \cite{Brimkov}. 
The work presented at ISAAC 2012 was partially supported by ESF EUROGIGA project GraDR as Czech research grant GA\v CR GIG/11/E023. 
The work presented at IWCIA 2014 was supported by NSF grants No 0802964 and No 0802994.}

\end{document}